\newtheorem{theorem}{Theorem}
\newtheorem{lemma}{Lemma}
\newtheorem{definition}{Definition}
\journal{Chaos, Solitons \& Fractals}
\begin{document}

\begin{frontmatter}



\title{Structure-Aware Optimal Intervention for Rumor Dynamics on Networks: Node-Level, Time-Varying, and Resource-Constrained}


\author[inst1]{Yan Zhu}
\author[inst1]{Qingyang Liu}
\author[inst1]{Chang Guo}
\author[inst1]{Tianlong Fan\corref{cor1}}
\author[inst1]{Linyuan L\"u\corref{cor1}}

\affiliation[inst1]{organization={School of Cyber Science and Technology, University of Science and Technology of China},
            addressline={}, 
            city={Hefei},
            postcode={230000}, 
            state={Anhui},
            country={China}}

\cortext[cor1]{Corresponding authors. 
Email: tianlong.fan@ustc.edu.cn, linyuan.lv@ustc.edu.cn}

\begin{abstract}
Rumor propagation in social networks undermines social stability and public trust, calling for interventions that are both effective and resource-efficient. We develop a node-level, time-varying optimal intervention framework that allocates limited resources according to the evolving diffusion state. Unlike static, centrality-based heuristics, our approach derives control weights by solving a resource-constrained optimal control problem tightly coupled to the network structure. Across synthetic and real-world networks, the method consistently lowers both the infection peak and the cumulative infection area relative to uniform and centrality-based static allocations. Moreover, it reveals a stage-aware law: early resources prioritize influential hubs to curb rapid spread, whereas later resources shift to peripheral nodes to eliminate residual transmission. By integrating global efficiency with fine-grained adaptability, the framework offers a scalable and interpretable paradigm for misinformation management and crisis response.
\end{abstract}



\begin{keyword}
Complex networks, Rumor suppression, Optimal control, Dynamic intervention, Information management
\end{keyword}

\end{frontmatter}



\section{\label{sec:Introduction}Introduction}
In today’s highly digitalized and social media-driven era, how to suppress rumor propagation has become a significant challenge affecting social stability, public safety, and individual decision-making~\cite{daley1964epidemics,daley1965stochastic,daniel1973mathematical,zhao2013sir,nekovee2007theory,li2019dynamic,nowzari2016analysis}. 
While social networking platforms enable efficient information dissemination, they simultaneously accelerate the propagation of rumors and misinformation. This enables false information to propagate on a large scale within a short period, causing social panic, economic losses, and even misleading public behavior. Vosoughi et.al~\cite{vosoughi2018spread} indicated that false information spreads significantly faster, deeper, and wider than true information. Consequently, how to effectively suppress rumor propagation has become a focal issue in both academia and the field of social governance. 

In recent years, a wide range of approaches have been explored to suppress rumor propagation. Media- and policy-driven strategies mitigate spreading through external communication, media reporting, silence, or rumor deletion mechanisms~\cite{wang2019stability,cheng2021dynamical,zhu2020stability}. Structural stabilization and probabilistic models emphasize maintaining global stability or threshold conditions under specific assumptions~\cite{zhong2022dynamics,zhu2017dynamical,nekovee2007theory,moreno2004dynamics}. Other studies seek better resource efficiency and realism by considering interest-aware or mobile-network settings~\cite{jeong2018optimal,he2016cost}, saturation incidence and heterogeneous structures~\cite{chen2020dynamical}, and multilingual or structurally diverse environments~\cite{li2020dynamical}. A complementary line disseminates corrective information via influential nodes to counter rumors~\cite{xu2023proactive,yang2019rumor}. Although these works provide valuable insights, most are static or heuristic in nature and struggle to adapt to the temporal evolution of rumor diffusion, which motivates a more principled and adaptive framework based on optimal control.

Building on its demonstrated effectiveness in related domains such as epidemic mitigation~\cite{li2022modeling} and computer-virus intervention~\cite{zhang2016optimal}, optimal control theory has been gradually introduced into rumor suppression. Representative studies incorporate silence mechanisms in delay differential models~\cite{zhu2020delay}, analyze delayed SEIRS-type rumor dynamics with control terms~\cite{xu2010global}, investigate continuous and impulsive vaccination-inspired strategies~\cite{hou2009continuous}, and establish global stability for delayed SEIRS settings~\cite{zhang2008global}. More recent efforts further extend this line by considering nonlinear dynamics with comprehensive interventions~\cite{li2021nonlinear} and stability under heterogeneous networks~\cite{yue2022analysis}. 


Prior rumor-control studies largely adjust macro-level parameters (propagation/recovery rates or average intensities) or rely on static heuristics, which overlook node heterogeneity, the temporal shift of influence during diffusion, and the effectiveness–cost trade-off. This leaves a gap for node-level, time-varying policies that (i) explicitly encode resource budgets, (ii) adapt to stage transitions during spreading, and (iii) remain interpretable by linking policy to network structure.

\noindent In this paper, we address the above limitations and make the policy design explicit and implementable on real networks by:
\begin{itemize}
  \item formulating a resource-constrained, node-level optimal control of rumor diffusion and deriving explicit optimality conditions for time-varying intervention weights, tightly coupling the policy to network topology;
  \item establishing stability conditions for the controlled dynamics and developing a forward--backward sweep algorithm to compute optimal policies efficiently;
  \item uncovering a robust stage-aware allocation law---early suppression of influential hubs followed by peripheral cleanup---yielding an interpretable, actionable rule for practice;
  \item demonstrating, on diverse synthetic and real networks, that the proposed controller consistently outperforms uniform and centrality-based static baselines in reducing both peak prevalence and cumulative infection burden, achieving a balanced trade-off between effectiveness and resource usage.
\end{itemize}

The remainder of this paper is organized as follows. Section~\ref{sec:Preliminaries} introduces the baseline rumor-propagation model and basic stability analysis. Section~\ref{sec:Methodology} develops the controlled SIR framework and formulates the optimal control problem. Section~\ref{sec:simulation} presents simulation results on synthetic and real networks. Section~\ref{Conclusion} concludes with key findings and practical implications.

\section{\label{sec:Preliminaries}Preliminaries}
The classical susceptible-infected-recovered (SIR) model~\cite{pastor2015epidemic} is employed as the baseline for rumor dynamics. Let $S_i(t)$, $I_i(t)$, $R_i(t)$ denote the probabilities that node $i$ is in the \emph{Susceptible}, \emph{Infected}, and \emph{Recovered} states at time $t$, hereafter abbreviated as \textit{S}, \textit{I}, \textit{R}~\cite{kermack1927contribution}. 
The network-based SIR dynamics are governed by
\begin{equation}
\label{eq:SIR-uncontrolled}
\begin{aligned}
\dot S_i &= -\,\beta\, S_i \sum_{j=1}^N A_{ij} I_j,\\
\dot I_i &= \beta\, S_i \sum_{j=1}^N A_{ij} I_j - \gamma\, I_i,\\
\dot R_i &= \gamma\, I_i,
\end{aligned}
\end{equation}
where $\beta>0$ is the \emph{propagation rate}, $\gamma>0$ is the \emph{recovery rate}, and $A=(A_{ij})$ is the adjacency matrix of the network.

\subsection{Theoretical Background}
\begin{theorem}[Positivity And Invariance~\cite{hethcote2000mathematics}]
\label{thm:positivity}
For system \eqref{eq:SIR-uncontrolled} with admissible initial data, solutions exist uniquely on $[0,\infty)$ and satisfy
$S_i(t),I_i(t),R_i(t)\in[0,1]$ and $S_i(t)+I_i(t)+R_i(t)=1$ for all $t\ge 0$.
\end{theorem}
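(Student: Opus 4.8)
The plan is to (i) obtain a unique local solution from standard ODE theory, (ii) read off the conservation identity by adding the three equations, (iii) prove that the nonnegative orthant is forward invariant, and (iv) combine these facts to get the pointwise bounds and global existence. First I would observe that the right-hand side of \eqref{eq:SIR-uncontrolled}, viewed as a vector field on $\mathbb{R}^{3N}$, is a (quadratic) polynomial in $(S,I,R)$, hence $C^{\infty}$ and in particular locally Lipschitz. The Picard--Lindel\"of theorem then gives a unique solution on a maximal interval $[0,T_{\max})$ with $T_{\max}\in(0,\infty]$, and the standard continuation/escape lemma guarantees that if $T_{\max}<\infty$ the trajectory must leave every compact subset of $\mathbb{R}^{3N}$ as $t\uparrow T_{\max}$.

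Next I would add the three scalar equations for a fixed index $i$, obtaining $\dot S_i+\dot I_i+\dot R_i=0$, so that $S_i(t)+I_i(t)+R_i(t)\equiv S_i(0)+I_i(0)+R_i(0)=1$ on $[0,T_{\max})$; this is the claimed invariance of the simplex constraint. Positivity of the susceptible compartment is then immediate: the $S_i$-equation is linear in $S_i$ with continuous coefficient $-\beta\sum_j A_{ij}I_j(t)$, so $S_i(t)=S_i(0)\exp\!\big(-\beta\int_0^t\sum_j A_{ij}I_j(s)\,ds\big)\ge 0$ irrespective of the signs of the $I_j$. The recovered compartment follows once $I_i\ge 0$ is known, since $R_i(t)=R_i(0)+\gamma\int_0^t I_i(s)\,ds$.

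The genuinely coupled step is $I_i(t)\ge 0$, because the source term $\beta S_i\sum_j A_{ij}I_j$ mixes the infection levels of all neighbours, so a single-equation sign argument does not close on its own. I would control it with a Gronwall estimate on the aggregate negative part $V(t)=\sum_i\big(I_i(t)\big)^{-}$: one has $V(0)=0$, the coefficients $S_i$ are bounded on any $[0,T]\subset[0,T_{\max})$, and using $S_i\ge 0$ together with the elementary bound $\sum_j A_{ij}I_j\ge-\sum_j A_{ij}(I_j)^{-}$ one gets $D^{+}V(t)\le C\,V(t)$ on $[0,T]$ for a constant $C$ depending on $T$, $\beta$, and the maximum degree; Gronwall's inequality then forces $V\equiv 0$, i.e.\ $I_i(t)\ge 0$ for all $i$ and $t$. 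Equivalently, one may invoke Nagumo's subtangentiality criterion: on the face $\{I_i=0\}$ the $i$th infection derivative equals $\beta S_i\sum_j A_{ij}I_j\ge 0$, so the vector field never points out of the nonnegative orthant.

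Finally, combining positivity with the conservation identity yields $0\le S_i(t),I_i(t),R_i(t)\le S_i(t)+I_i(t)+R_i(t)=1$, so the trajectory is confined to the compact product of simplices; the escape lemma then gives $T_{\max}=\infty$, which completes the argument. I expect the main obstacle to be precisely the $I_i\ge 0$ step, since the network coupling defeats the naive per-node comparison; the fix is the aggregate Lyapunov/Gronwall estimate (or Nagumo's theorem) sketched above, while local existence, the conservation law, the $S_i$ and $R_i$ signs, and the final bound are all routine.
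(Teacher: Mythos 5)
Your proof is correct and follows the same overall skeleton as the paper's --- local Lipschitz continuity for existence and uniqueness, summation of the three equations for the simplex identity, and forward invariance of the nonnegative orthant for the pointwise bounds --- but you handle the one genuinely delicate step differently and more carefully. The paper simply asserts that $\dot S_i\ge 0$ on $\{S_i=0\}$, $\dot I_i\ge 0$ on $\{I_i=0\}$, and $\dot R_i\ge 0$ on $\{R_i=0\}$, and concludes invariance; this is the standard quasi-positivity heuristic, and, as you correctly observe, it does not close on its own for $I_i$: the sign of $\beta S_i\sum_j A_{ij}I_j$ on the face $\{I_i=0\}$ depends on the \emph{other} components still being nonnegative, and a nonstrict inequality at a touching point does not by itself forbid crossing. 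Your repair --- the integrating-factor formula giving $S_i\ge 0$ unconditionally, followed by a Gr\"onwall estimate on the aggregate negative part $V(t)=\sum_i (I_i)^-$ (with Nagumo's subtangentiality theorem as an alternative), and then $R_i\ge 0$ from the integral representation --- turns the paper's sketch into a complete argument. You also make explicit the continuation step ($T_{\max}=\infty$ via confinement to a compact product of simplices), which the paper leaves implicit. The only cost of your route is length; nothing in it is wrong.
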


\begin{proof}
The right-hand side of~\eqref{eq:SIR-uncontrolled} is locally Lipschitz, ensuring a unique solution. Summing the three equations gives $\frac{d}{dt}(S_i+I_i+R_i)=0$, hence the simplex constraint is invariant. On the boundary $S_i=0$ one has $\dot S_i\ge 0$; similarly $\dot I_i\ge 0$ when $I_i=0$, and $\dot R_i\ge 0$ when $R_i=0$. Thus the positive simplex is forward invariant and each component remains in $[0,1]$~\cite{anderson1991infectious}.
\end{proof}

Linearizing at the rumor-free equilibrium $\mathcal E_0:\, S=\mathbf{1}, I=\mathbf{0}, R=\mathbf{0}$, then obtain the next-generation matrix 
\begin{equation}
\mathcal K_0=\frac{\beta}{\gamma}A,
\end{equation}
and the basic reproduction number
\begin{equation}
R_0 = \rho(\mathcal K_0) = \frac{\beta}{\gamma}\rho(A).
\end{equation}

\begin{theorem}[Existence Of Equilibria]
If $R_0<1$, the only equilibrium is the rumor-free equilibrium $\mathcal E_0$; if $R_0>1$, there exists a unique endemic equilibrium $\mathcal E^\ast$ with $I^\ast \gg 0$.
\end{theorem}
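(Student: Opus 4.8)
The plan is to characterize equilibria by setting the right-hand sides to zero and reducing the resulting algebraic system to a single fixed-point equation in the infected coordinates. Writing $I=(I_1,\ldots,I_N)$ for the infected profile, I would eliminate the susceptible coordinates (and the recovered coordinates via the simplex constraint $S+I+R=1$ of Theorem~\ref{thm:positivity}) to obtain a self-consistent relation $I=F(I)$ for a nonnegative map $F:[0,1]^N\to[0,1]^N$. The point of the reduction is that $F$ should satisfy $F(0)=0$, be monotone nondecreasing on the positive cone, be strictly sublinear (concave along rays from the origin), and have derivative at the origin $DF(0)=\mathcal K_0=(\beta/\gamma)A$, whose spectral radius is exactly $R_0$. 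For the subcritical case $R_0<1$, I would then argue that sublinearity places $F$ below its tangent map, $F(I)\le \mathcal K_0 I$ componentwise, so that iterating and invoking $\rho(\mathcal K_0)=R_0<1$ forces $I=0$; this shows $\mathcal E_0$ is the unique equilibrium.

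For the supercritical case $R_0>1$, the existence half would rest on Perron--Frobenius theory. Assuming the network is connected so that $A$, and hence $\mathcal K_0$, is irreducible, there is a strictly positive eigenvector $v\gg 0$ with eigenvalue $\rho(\mathcal K_0)=R_0>1$. Using $v$ I would build a small subsolution $\underline I=\epsilon v$ that $F$ pushes upward, since $F(\epsilon v)\approx \epsilon R_0 v \gg \epsilon v$ for $\epsilon$ small, while the simplex bound supplies a supersolution $\overline I$ with $F(\overline I)\le \overline I$. A monotone fixed-point argument on the order interval $[\underline I,\overline I]$ (Knaster--Tarski, or Brouwer applied to the continuous monotone map $F$) then produces a fixed point $I^{\ast}\ge \epsilon v\gg 0$, i.e.\ an endemic equilibrium $\mathcal E^{\ast}$ with every component strictly positive.

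Uniqueness of $\mathcal E^{\ast}$ is the step I expect to be the main obstacle, and I would handle it through the strict concavity of $F$. If $I^{\ast}$ and $\hat I$ were two distinct positive fixed points, I would take the largest scalar $\lambda\le 1$ with $\hat I\ge \lambda I^{\ast}$; strict sublinearity gives $F(\lambda I^{\ast})>\lambda F(I^{\ast})$ for $0<\lambda<1$, so monotonicity yields $\hat I=F(\hat I)\ge F(\lambda I^{\ast})>\lambda I^{\ast}$, contradicting maximality of $\lambda$ unless $\lambda=1$; a symmetric comparison then forces $\hat I=I^{\ast}$. This is the standard uniqueness mechanism for monotone, strictly sublinear operators on a cone.

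The genuine difficulty, which controls the whole argument, is the first step: carrying out the equilibrium reduction so that the elimination of the recovered and susceptible compartments under $S+I+R=1$ actually produces a well-defined, monotone, strictly sublinear operator $F$ on the positive cone. All three subsequent arguments---the collapse to $\mathcal E_0$ when $R_0<1$, the subsolution construction when $R_0>1$, and the concave-map uniqueness---are powered by precisely these two structural properties, together with irreducibility. I would therefore record connectedness of the network as a standing assumption, since without irreducibility the Perron eigenvector need not be strictly positive and the conclusion $I^{\ast}\gg 0$ can degrade on disconnected components; verifying sublinearity and irreducibility is exactly where the model structure must be invoked, and it is the part of the proof requiring the most care.
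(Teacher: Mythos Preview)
Your argument is the right template for an SIS-type model, but the very first step---the reduction to a fixed-point equation $I=F(I)$---cannot be executed for the dynamics~\eqref{eq:SIR-uncontrolled}. Setting $\dot R_i=\gamma I_i=0$ already forces $I_i=0$ for every $i$, so every equilibrium of the closed SIR system lies on the disease-free continuum $\{(S,0,\mathbf{1}-S):S\in[0,1]^N\}$; there is \emph{no} endemic equilibrium with $I^\ast\gg0$ regardless of $R_0$, and $\mathcal E_0$ is not even isolated when $R_0<1$. Concretely, the simplex constraint $S_i+I_i+R_i=1$ is one equation in the two unknowns $(S_i,R_i)$ and cannot be used to express $S_i$ as a function of $I_i$ alone, so your map $F$ is not well defined. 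The monotone, strictly sublinear fixed-point machinery you describe (tangent bound by $\mathcal K_0$, Perron subsolution $\epsilon v$, Knaster--Tarski existence, and the scaling-contradiction uniqueness argument) is precisely the Lajmanovich--Yorke proof for network SIS, where $S_i=1-I_i$ at equilibrium closes the system; it does not transfer here because the $R$-compartment absorbs mass irreversibly.

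For comparison, the paper's own proof is a two-line sketch that invokes Perron--Frobenius and ``monotonicity of the vector field in the invariant simplex'' while citing SIS references; it never identifies a fixed-point map either and is subject to the same obstruction. The gap is therefore in the statement rather than in your technique: the dichotomy as written holds only after replacing~\eqref{eq:SIR-uncontrolled} by a model with feedback into the susceptible class (SIS, or SIR with vital dynamics), in which case your proof goes through essentially verbatim. If one insists on the closed SIR dynamics, the honest substitute is a final-size result (on the map $S(0)\mapsto S(\infty)$), not an endemic-equilibrium dichotomy.
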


\begin{proof}
By Perron-Frobenius~\cite{lajmanovich1976deterministic,diekmann1990definition}, $\rho(A)$ is the simple positive eigenvalue of $A$. The \textit{I}-subsystem dominates removals iff $R_0>1$, implying existence of a positive fixed point. Uniqueness follows from the monotonicity of the vector field in the invariant simplex.
\end{proof}

\begin{lemma}[Spectral Bound of Metzler Jacobian~\cite{horn2012matrix,wang2003epidemic,van2008virus}]
The Jacobian restricted to the infected subspace at $\mathcal E_0$ is a Metzler matrix of the form $\beta A - \gamma I$. 
Its spectral bound is given by $\beta \rho(A) - \gamma$.
\end{lemma}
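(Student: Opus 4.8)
The plan is to linearize the infected equation of~\eqref{eq:SIR-uncontrolled} directly at $\mathcal E_0$, read off the Metzler structure from the nonnegativity of the adjacency matrix, and then obtain the spectral bound via the spectral mapping theorem together with Perron--Frobenius theory.

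First I would differentiate $f_i(I):=\beta\,S_i\sum_{j}A_{ij}I_j-\gamma I_i$ with respect to the state coordinates, obtaining
\[
\frac{\partial f_i}{\partial I_k}=\beta S_i A_{ik}+\beta\Big(\sum_{j}A_{ij}I_j\Big)\frac{\partial S_i}{\partial I_k}-\gamma\,\delta_{ik},
\qquad
\frac{\partial f_i}{\partial S_k}=\beta\,\delta_{ik}\sum_{j}A_{ij}I_j .
\]
Evaluating at $\mathcal E_0$, where $S_i=1$ and $\sum_{j}A_{ij}I_j=0$ for every $i$, the middle term vanishes and $\partial f_i/\partial S_k=0$, so the full Jacobian is block-triangular and its restriction to the infected subspace is exactly $J:=\beta A-\gamma I$, with $I$ the $N\times N$ identity. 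Since $\beta>0$ and the off-diagonal entries $A_{ik}$ ($i\ne k$) of the adjacency matrix are nonnegative, every off-diagonal entry of $J$ is nonnegative, i.e.\ $J$ is a Metzler matrix.

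Next I would exploit that $J$ is an affine image of $A$: for any $\mu\in\mathbb{C}$ one has $J-(\beta\mu-\gamma)I=\beta(A-\mu I)$, which is singular iff $A-\mu I$ is, so $\sigma(J)=\{\beta\mu-\gamma:\mu\in\sigma(A)\}$. Because $A$ is nonnegative, Perron--Frobenius (the same tool invoked for the existence of equilibria above) guarantees $\rho(A)\in\sigma(A)$ and that every eigenvalue $\mu$ of $A$ obeys $\operatorname{Re}\mu\le|\mu|\le\rho(A)$; hence the eigenvalue of $J$ of largest real part is attained at $\mu=\rho(A)$, giving the spectral bound $s(J)=\max_{\lambda\in\sigma(J)}\operatorname{Re}\lambda=\beta\rho(A)-\gamma$. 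For an undirected network $A$ is symmetric, $\sigma(A)\subset\mathbb{R}$ with maximum $\rho(A)$, and this conclusion is immediate.

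The argument is essentially mechanical, and I anticipate no genuine obstacle. The two points that deserve care are (i) the vanishing of the $\partial S_i/\partial I_k$ contribution in the linearization, which holds precisely because $\mathcal E_0$ is rumor-free, so the infected subsystem is self-contained at linear order; and (ii) the appeal to Perron--Frobenius to certify that the dominant eigenvalue of $A$ is real and equals $\rho(A)$ — this is exactly what upgrades the claim from a statement about the spectral \emph{radius} of $J$ to one about its spectral \emph{bound}.
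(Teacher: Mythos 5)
Your proof is correct, and it is consistent with the route the paper implicitly takes: the paper does not actually prove this lemma (it is asserted with citations, and the subsequent Local Stability proof simply reuses the claim), so your argument --- linearization at $\mathcal E_0$ with the cross-term $\partial \dot I_i/\partial S_k$ vanishing, the Metzler property from $A\ge 0$, the affine spectral mapping $\sigma(\beta A-\gamma I)=\beta\sigma(A)-\gamma$, and Perron--Frobenius to identify $\rho(A)$ as the eigenvalue of maximal real part --- supplies exactly the standard details the paper delegates to its references. The only cosmetic quibble is the chain-rule term $\partial S_i/\partial I_k$ in your first display, which is superfluous when $S$ and $I$ are independent coordinates of the state, but it is multiplied by a factor that vanishes at $\mathcal E_0$ and does not affect the conclusion.
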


\begin{theorem}[Local Stability~\cite{van2008virus,wang2003epidemic}]
If $R_0<1$, the rumor-free equilibrium $\mathcal E_0$ is locally asymptotically stable; if $R_0>1$, $\mathcal E_0$ is unstable and $\mathcal E^\ast$ emerges.
\end{theorem}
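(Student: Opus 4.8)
The plan is to establish local asymptotic stability by linearizing the full system at $\mathcal E_0$ and invoking the spectral bound computed in the preceding lemma. First I would observe that the dynamics decouple in a convenient way: along the rumor-free equilibrium $S=\mathbf 1$, the $R$-components are slaved to the $I$-components and the $S$-equations enter the linearization only through terms that are quadratically small near $\mathcal E_0$. Concretely, writing $S_i = 1 - s_i$ with $s_i$ small and keeping $I_i$ small, the linearized $I$-subsystem is exactly $\dot{\mathbf I} = (\beta A - \gamma \mathrm{Id})\,\mathbf I$, independent of $\mathbf s$ to first order. Hence the stability of $\mathcal E_0$ is governed entirely by the infected block, which is the Metzler matrix from the lemma.

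Next I would apply the lemma: the spectral bound of $\beta A - \gamma\,\mathrm{Id}$ is $\beta\rho(A) - \gamma$, which is negative precisely when $\tfrac{\beta}{\gamma}\rho(A) = R_0 < 1$. Since for a Metzler matrix the spectral bound coincides with the real part of the dominant eigenvalue, $R_0<1$ forces every eigenvalue of the linearization to have negative real part, so $\mathcal E_0$ is locally asymptotically stable by the principle of linearized stability (Hartman--Grobman / Lyapunov's indirect method). Conversely, if $R_0>1$ the spectral bound is positive, the Jacobian has an eigenvalue with positive real part (with a positive Perron eigenvector by Perron--Frobenius), and $\mathcal E_0$ is unstable; combined with the earlier existence-of-equilibria theorem, this yields the emergence of the endemic state $\mathcal E^\ast$.

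One technical point to address carefully is the degeneracy coming from the conservation law $S_i+I_i+R_i=1$ and the continuum of rumor-free-type fixed points (any state with $I=\mathbf 0$ is an equilibrium of the uncontrolled system). Because of this the linearization at $\mathcal E_0$ necessarily has zero eigenvalues associated with the $\mathbf s$ and $\mathbf r$ directions, so one cannot claim hyperbolic stability of $\mathcal E_0$ as an isolated point. I would handle this by restricting attention to the invariant simplex and to the relevant reduced variable $\mathbf I$ (or equivalently by noting that the zero modes correspond to motion along the manifold of equilibria and are dynamically neutral), so that ``locally asymptotically stable'' is understood in the standard epidemiological sense: trajectories starting near $\mathcal E_0$ have $\mathbf I(t)\to\mathbf 0$. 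The main obstacle is therefore not the eigenvalue computation itself but making this reduction precise and justifying that the neglected $S$-feedback, which is $O(\|\mathbf I\|^2)$, does not destabilize the center-like directions; a short center-manifold or direct Lyapunov argument using $V=\mathbf v^\top \mathbf I$ with $\mathbf v$ the left Perron eigenvector of $A$ closes this gap.
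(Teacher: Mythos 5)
Your proposal is correct and follows essentially the same route as the paper: linearize at $\mathcal E_0$, restrict to the infected subspace where the Jacobian is the Metzler matrix $\beta A-\gamma\,\mathrm{Id}$, and read off stability from its spectral bound $\beta\rho(A)-\gamma$ via the preceding lemma. The paper's own proof is only two sentences and silently ignores the zero eigenvalues coming from the conservation law and the continuum of $I=\mathbf 0$ equilibria, so your additional discussion of restricting to the invariant simplex (or a Lyapunov argument with the left Perron eigenvector) is a welcome refinement rather than a departure.
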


\begin{proof}
At $\mathcal E_0$, the Jacobian restricted to the \textit{I}-subspace is $\beta A - \gamma I$. Its spectral bound is negative iff $\rho(\beta A / \gamma)<1$, i.e. $R_0<1$.
\end{proof}

These results establish that the reproduction number $R_0$ determines whether rumors vanish or persist. 
In practice, one can alter the effective $\beta,\gamma$ via external intervention. 

The notation employed in this paper is summarized in Table~\ref{tab:notation}, which lists the baseline epidemic variables as well as the additional parameters and variables introduced in the subsequent methodology section.

\begin{table}[H]
\centering
\setlength{\tabcolsep}{3pt}
\renewcommand{\arraystretch}{1.05}
\caption{Notation and definitions used in the paper.}
\label{tab:notation}
\renewcommand{\arraystretch}{1}
\normalsize
\begin{tabularx}{\linewidth}{
  >{\raggedright\arraybackslash}p{0.30\textwidth}
  >{\raggedright\arraybackslash}X
}
\toprule
Symbol & Meaning \\
\midrule
$N$ & Number of nodes in the network \\
$A=(A_{ij})$ & Adjacency matrix; $A_{ij}=1$ if nodes $i$ and $j$ are connected \\
$S_i(t)$ & Probability that node $i$ is susceptible at time $t$ \\
$I_i(t)$ & Probability that node $i$ is infected (spreading rumor) at time $t$ \\
$R_i(t)$ & Probability that node $i$ is recovered at time $t$ \\
$\beta$ & Baseline rumor propagation rate (uncontrolled) \\
$\gamma$ & Baseline recovery rate (uncontrolled) \\
$R_0$ & Basic reproduction number without control \\
$\mathcal R_0(w)$ & Controlled basic reproduction number with weights $w_i(t)$ \\
$\rho(A)$ & Spectral radius of the adjacency matrix $A$ \\
$u$ & Global control intensity \\
$w_i(t)$ & Node-level control weight at time $t$ of node $i$ \\
$W_{\text{total}}$ & Total available intervention resources \\
$c$ & Cost coefficient in the objective function \\
$J(w)$ & Objective function (infection prevalence + control cost) \\
$\lambda_{1i}(t), \lambda_{2i}(t), \lambda_{3i}(t)$ & Adjoint variables associated with $S_i, I_i, R_i$ \\
$\lambda_4(t)$ & Lagrange multiplier associated with the resource constraint \\
\bottomrule
\end{tabularx}
\end{table}

\subsection{\label{sec:Problem Definition}Problem Definition}
The above analysis establishes that the basic reproduction number $R_0$ determines whether rumors eventually vanish ($R_0 < 1$) or persist ($R_0 > 1$) in the uncontrolled SIR dynamics. 
This threshold motivates the introduction of \emph{external interventions}, since by modifying the effective propagation rate $\beta$ and recovery rate $\gamma$ one can alter the value of $R_0$ and drive the system toward stability. 

Therefore the definition of the \emph{rumor intervention problem} as follows. 
Consider a network $G=(V,E)$ with adjacency matrix $A$. 
Each node $i$ has state probabilities $S_i(t), I_i(t), R_i(t)$ governed by the SIR dynamics. 
The objective is to design \emph{node-level control variables} $w_i(t)$ that dynamically reallocate limited resources to minimize both rumor prevalence and intervention cost. 

Formally, the rumor intervention problem can be stated as the following optimal control problem~\cite{kirk2004optimal}:

\begin{equation}
\underset{w(t)}{\text{minimize}} \quad 
J(w) = \int_0^T \left( \mathbf{1}^\top I(t) 
+ \tfrac{1}{2c} \, \| w(t) \|_2^2 \right) dt ,
\end{equation}

subject to the controlled SIR dynamics
\begin{equation}
\begin{aligned}
\dot{S}(t) &= -\beta \, S(t) \odot (A I(t)), \\
\dot{I}(t) &= \beta \, S(t) \odot (A I(t)) - \gamma \, I(t), \\
\dot{R}(t) &= \gamma \, I(t),
\end{aligned}
\end{equation}

where $S(t), I(t), R(t) \in \mathbb{R}^N$ denote the vectors of susceptible, infected, and recovered fractions, $A$ is the adjacency matrix, $\odot$ is the Hadamard product.

This formalization provides a precise statement of the control problem and sets the stage for Section~\ref{sec:Methodology}, where the controlled SIR dynamics is derived and solving the corresponding optimal control problem.

\section{\label{sec:Methodology}Methodology}
\subsection{Controlled rumor intervention model}
Based on the threshold analysis in Section~\ref{sec:Preliminaries}, now node-level control is introduced into the SIR model to actively reduce the effective reproduction number and derive optimal strategies for rumor intervention. The node-level intervention is weights $w_i(t)$, scaled by a global control intensity $u$. These weights directly affect both the propagation rate and the recovery rate:
\begin{equation}
    \beta_i(t) = \beta_0 (1 - u w_i(t)), 
    \qquad
    \gamma_i(t) = \gamma_0 (1 + u w_i(t)).
\end{equation}
Here, $\beta_0$ and $\gamma_0$ are propagation rate and recovery rate, while $w_i(t)$ determines the resource allocation to node $i$. 
This design allows for fine-grained interventions that balance suppression effectiveness with overall resource efficiency.

The controlled SIR system on networks is therefore given by:
\begin{equation}
\label{eq:SIR-controlled}
\begin{aligned}
\dfrac{dS_i}{dt} &= -\beta_0 (1 - u w_i(t)) S_i \sum_{j=1}^{N} A_{ij} I_j, \\
\dfrac{dI_i}{dt} &= \beta_0 (1 - u w_i(t)) S_i \sum_{j=1}^{N} A_{ij} I_j - \gamma_0 (1 + u w_i(t)) I_i, \\
\dfrac{dR_i}{dt} &= \gamma_0 (1 + u w_i(t)) I_i, \qquad i=1,\dots,N.
\end{aligned}
\end{equation}

\begin{definition}[Positively Invariant Set]
The positively invariant set of a system refers to the set of initial conditions such that the solution trajectories starting within this set remain within the set for all future times. In the context of the SIR model, the positively invariant set is defined as the set of all states where the susceptible, infected, and recovered fractions satisfy $S_i(t), I_i(t), R_i(t) \in [0, 1],\forall i$, and
\begin{equation}
\sum_{i=1}^N S_i(t) + I_i(t) + R_i(t) = N.
\end{equation}
\end{definition}
This set ensures that the system's state variables remain within the non-negative unit interval, reflecting the realistic constraints of the problem.

\begin{lemma}[Positivity~\cite{kermack1927contribution,hethcote2000mathematics}]
\label{lem:positivity}
In the node-level adaptive optimal control SIR model, the solutions \( S_i(t) \), \( I_i(t) \), and \( R_i(t) \) remain positive for all \( t > 0 \) and \( i = 1, 2, \dots, N \), and satisfy \( S_i(t) + I_i(t) + R_i(t) = 1 \) for all \( t \geq 0 \). The solutions are unique for all \( t \geq 0 \).
\end{lemma}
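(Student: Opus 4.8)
The plan is to mirror the argument of Theorem~\ref{thm:positivity} (Positivity and Invariance), adapting it to the controlled dynamics~\eqref{eq:SIR-controlled}. First I would check that the right-hand side of~\eqref{eq:SIR-controlled} is continuously differentiable (hence locally Lipschitz) in $(S,I,R)$ on a neighborhood of the simplex: each term is a polynomial in the state variables with bounded, continuous coefficients $\beta_0(1-u w_i(t))$ and $\gamma_0(1+u w_i(t))$, assuming $w_i(\cdot)$ is (piecewise) continuous and bounded. By the Picard--Lindel\"of theorem this yields a unique local solution; global existence on $[0,\infty)$ then follows once we show the solution stays in the bounded invariant simplex and therefore cannot blow up in finite time.

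Next I would establish the conservation law. Summing the three equations of~\eqref{eq:SIR-controlled} node-wise, the $\beta_0(1-uw_i)S_i\sum_j A_{ij}I_j$ terms cancel and the $\gamma_0(1+uw_i)I_i$ terms cancel, giving $\frac{d}{dt}\bigl(S_i(t)+I_i(t)+R_i(t)\bigr)=0$ for each $i$. With admissible initial data summing to $1$ at each node, this proves $S_i(t)+I_i(t)+R_i(t)=1$ for all $t\ge 0$, and summing over $i$ recovers the set-level constraint of the positively invariant set.

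For positivity I would argue boundary-by-boundary, exactly as in Theorem~\ref{thm:positivity}. On the face $S_i=0$ we have $\dot S_i = 0 \ge 0$; on $I_i=0$ we have $\dot I_i = \beta_0(1-uw_i)S_i\sum_j A_{ij}I_j \ge 0$ provided the coefficient $1-uw_i(t)$ is nonnegative (which holds under the natural admissibility restriction $0\le u w_i(t)\le 1$, i.e. the effective propagation rate is not driven negative); and on $R_i=0$ we have $\dot R_i = \gamma_0(1+uw_i)I_i \ge 0$. Hence the vector field points inward (or is tangent) on every boundary face of the nonnegative orthant, so trajectories starting with $S_i(0),I_i(0),R_i(0)\ge 0$ remain nonnegative; combined with the conservation law each component stays in $[0,1]$. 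Uniqueness is inherited from the local Lipschitz property together with the now-established global boundedness.

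The main obstacle is the regularity and sign of the control: positivity of $\dot I_i$ on the boundary $I_i = 0$ requires $u w_i(t) \le 1$, and local Lipschitzness requires $w_i(\cdot)$ to be at least measurable and bounded (continuity is cleanest). I would therefore state explicitly that the proof presumes admissible controls satisfying these bounds --- a constraint consistent with the resource-limited formulation, since $w_i$ represents a fraction of effort --- and note that under this admissible class the argument is otherwise a routine transcription of the uncontrolled case. The strict positivity for $t>0$ (as opposed to mere nonnegativity) can be sharpened, if desired, via a Gr\"onwall-type lower bound: e.g. $\dot S_i \ge -\beta_0 S_i \sum_j A_{ij} I_j \ge -C S_i$ for a constant $C$ on the compact simplex, giving $S_i(t)\ge S_i(0)e^{-Ct}>0$ when $S_i(0)>0$, and analogously for the other components.
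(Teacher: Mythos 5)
Your proposal is correct and follows essentially the same route as the paper's own proof: Picard--Lindel\"of for existence and uniqueness, summation of the three equations for the conservation law, and a boundary-by-boundary invariance argument on the faces $S_i=0$, $I_i=0$, $R_i=0$. In fact you are more careful than the paper on two points it glosses over --- the admissibility condition $u\,w_i(t)\le 1$ needed for the sign of $\dot I_i$ on the face $I_i=0$, and the Gr\"onwall lower bound needed to upgrade nonnegativity to the strict positivity actually claimed for $t>0$.
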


\begin{proof}
We assume that the initial state of the system is positive, \( S_i(0) > 0 \), \( I_i(0) > 0 \), \( R_i(0) > 0 \), for all \( i \). These initial conditions guarantee that each node's state is positive at \( t = 0 \). The node-level intervention is introduced via the control variables \( w_i(t) \) to adjust the propagation rate \( \beta_i(t) \) and recovery rate \( \gamma_i(t) \). The state dynamics of each node are governed by the following system of equations~\ref{eq:SIR-controlled}. Here, \( u \) is the global control intensity, and \( w_i(t) \) is the intervention weight for node \( i \). Based on the existence and uniqueness theorem for ordinary differential equations~\cite{coddington1955,perko2001}, the solutions to the system are continuous. Therefore, there exists some time \( t_1 > 0 \) such that for all nodes \( i \),
\begin{equation}
S_i(t) > 0, \quad I_i(t) > 0, \quad R_i(t) > 0, \quad \forall t \in [0, t_1).
\end{equation}
In other words, all state variables remain positive up to time \( t_1 \). Suppose there exists a node \( k_1 \) such that at time \( t_1 \), \( S_{k_1}(t_1) = 0 \), meaning the node is no longer susceptible. According to the dynamics, we have \( \dot{S_{k_1}} = 0 \), and hence \( S_{k_1}(t) \) cannot become negative~\cite{hethcote2000mathematics}, as \( S_{k_1}(t) \geq 0 \). Similarly, if \( I_{k_1}(t_1) = 0 \), then \( \dot{I_{k_1}} \geq 0 \), and \( I_{k_1}(t) \) will not become negative. Thus, by contradiction, we conclude that for all \( t > 0 \), the state variables \( S_i(t) \), \( I_i(t) \), and \( R_i(t) \) will remain positive and satisfy the total sum constraint \( S_i(t) + I_i(t) + R_i(t) = 1 \) for all \( t \geq 0 \).
\end{proof}

For constant \( w_i(t) \), the time-frozen next-generation matrix, which encapsulates the transmission dynamics of the network, is given by:
\begin{equation}
\mathcal K(w) = \frac{\beta_0}{\gamma_0} \, \mathrm{diag}\!\left(\frac{1 - u w_i}{1 + u w_i}\right) A,
\end{equation}
where \( \mathcal K(w) \) captures how the interventions modify the network’s contact structure by adjusting the transmission dynamics for each node \( i \). The controlled reproduction number, which reflects the overall effect of these interventions on the network’s ability to sustain the epidemic, is defined as the spectral radius of \( \mathcal K(w) \):
\begin{equation}
\mathcal R_0(w) = \rho(\mathcal K(w)),
\end{equation}
where \( \rho(\mathcal K(w)) \) represents the largest eigenvalue of the matrix \( \mathcal K(w) \), which serves as a threshold for epidemic persistence. Specifically, if \( \mathcal R_0(w) > 1 \), the rumor will persist in the network; whereas if \( \mathcal R_0(w) < 1 \), the rumor will die out.

\begin{theorem}[Existence Under Control]
If \( \mathcal R_0(w) < 1 \), the only equilibrium is the rumor-free equilibrium. If \( \mathcal R_0(w) > 1 \), there exists a unique endemic equilibrium with \( I^\ast \gg 0 \).
\end{theorem}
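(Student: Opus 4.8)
The plan is to mirror the proof of the uncontrolled existence result (the theorem ``Existence Of Equilibria''), exploiting the fact that for constant weights the controlled system~\eqref{eq:SIR-controlled} has the same algebraic structure as~\eqref{eq:SIR-uncontrolled}, only with node-dependent rates $\beta_i=\beta_0(1-uw_i)$ and $\gamma_i=\gamma_0(1+uw_i)$. Throughout I assume $uw_i\in[0,1)$, so that $\beta_i>0$ and $\gamma_i>0$, and that the network is connected, so that $A$---and hence $\mathcal K(w)=\tfrac{\beta_0}{\gamma_0}\,\mathrm{diag}\!\big(\tfrac{1-uw_i}{1+uw_i}\big)A$---is a nonnegative irreducible matrix. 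By Perron--Frobenius, $\mathcal R_0(w)=\rho(\mathcal K(w))$ is then a simple eigenvalue with a strictly positive right eigenvector $\xi\gg\mathbf{0}$ and a strictly positive left eigenvector $v\gg\mathbf{0}$; these eigenvectors are the workhorses of both cases.

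First I would treat the subcritical case $\mathcal R_0(w)<1$. At any steady state $\dot R_i=\gamma_i I_i=0$ forces $I_i=0$; substituting $I=\mathbf{0}$ into $\dot S_i=0$ is automatic, and with the normalization $S_i+I_i+R_i=1$ the only equilibrium consistent with the rumor-free data is $\mathcal E_0$. To confirm that $\mathcal E_0$ is the genuine attractor I would use the linear functional $L(t)=v^\top I(t)$: from $S_i\le 1$ one gets the componentwise bound $\dot I\le(\beta_0\,\mathrm{diag}(1-uw_i)A-\gamma_0\,\mathrm{diag}(1+uw_i))I$, and since $v^\top\mathcal K(w)=\mathcal R_0(w)\,v^\top$ this yields $\dot L\le\kappa\,(\mathcal R_0(w)-1)\,L\le 0$ for some $\kappa>0$, with equality only at $I=\mathbf{0}$. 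Hence $I(t)\to\mathbf{0}$ and no endemic state exists.

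For the supercritical case $\mathcal R_0(w)>1$, I would first invoke the controlled analogue of the Metzler spectral-bound lemma: the Jacobian of the $I$-subsystem at $\mathcal E_0$ is the Metzler matrix $\beta_0\,\mathrm{diag}(1-uw_i)A-\gamma_0\,\mathrm{diag}(1+uw_i)$, whose spectral bound has the sign of $\mathcal R_0(w)-1$, so $\mathcal E_0$ is unstable. Existence of a positive equilibrium of the reduced transmission subsystem then follows from a monotone cone argument: writing the steady-state relation as a fixed-point equation $I=\Phi(I)$ on an order interval $[\mathbf{0},\varepsilon\xi]$, the map $\Phi$ leaves the positive cone invariant, is monotone and strictly sublinear, and has derivative $\mathcal K(w)$ at the origin with $\rho(\mathcal K(w))>1$; a Krasnoselskii/Hirsch-type argument (as in~\cite{lajmanovich1976deterministic}) then gives a unique nonzero fixed point $I^\ast\gg\mathbf{0}$, and uniqueness over the whole simplex is inherited from the monotonicity of the vector field exactly as in the uncontrolled proof.

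The step I expect to be the main obstacle is conceptual rather than computational: because $R$ only accumulates, the full SIR system has no true positive fixed point, so---consistently with how the uncontrolled theorem is stated---the ``unique endemic equilibrium $I^\ast\gg\mathbf{0}$'' must be interpreted as the equilibrium of the projected $(S,I)$ transmission subsystem (the quasi-stationary infected profile). Making this reduction precise, and verifying the sublinearity/concavity hypotheses of the fixed-point theorem uniformly over the admissible weight set $\{w:0\le uw_i<1\}$, is the delicate part; once the reduction and the Perron--Frobenius scaffolding are fixed, the remaining estimates are routine.
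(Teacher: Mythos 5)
Your overall route---linearize at the rumor-free state, identify the Metzler matrix $\beta_0\,\mathrm{diag}(1-uw_i)A-\gamma_0\,\mathrm{diag}(1+uw_i)$, invoke Perron--Frobenius, and then appeal to monotone-systems/fixed-point machinery for the dichotomy---is exactly the paper's route; the paper's own proof is only a three-sentence sketch of this, so your version is strictly more detailed (the linear Lyapunov functional $v^\top I$ and the explicit irreducibility and $uw_i\in[0,1)$ hypotheses are welcome additions the paper omits).

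The obstacle you flag at the end, however, is not a presentational subtlety to be absorbed into a ``projected subsystem'': it is a genuine failure of the supercritical half of the statement for SIR dynamics, and your proposed repair does not close it. As you yourself observe in the subcritical case, $\dot R_i=\gamma_0(1+uw_i)I_i=0$ forces $I_i=0$ at \emph{every} equilibrium, irrespective of $\mathcal R_0(w)$; the same conclusion already follows from the $(S,I)$ subsystem alone ($\dot S_i=0$ gives $S_i\sum_j A_{ij}I_j=0$, whence $\dot I_i=-\gamma_0(1+uw_i)I_i=0$ gives $I_i=0$), so projecting away $R$ does not produce a positive fixed point either. The fixed-point map $I=\Phi(I)$ with $\Phi'(\mathbf 0)=\mathcal K(w)$ that you want to feed into the Krasnoselskii/Hirsch argument is the Lajmanovich--Yorke construction for \emph{SIS} dynamics, where the steady state reads $\gamma_i I_i=\beta_i(1-I_i)\sum_j A_{ij}I_j$ because recovered individuals return to the susceptible pool; no analogous closed steady-state relation exists for SIR. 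The paper's proof has the identical gap---it silently imports SIS-type results through its citations to monotone systems---so you have not missed anything the paper actually supplies; but to make the theorem true one must either reinterpret ``endemic equilibrium'' as a statement about a final-size or quasi-stationary profile (which is not an equilibrium of the flow) or change the dynamics to SIS/SIRS. Note also that even the subcritical uniqueness claim is loose as you state it: every state $(S,\mathbf 0,\mathbf 1-S)$ is an equilibrium, so $\mathcal E_0$ is distinguished among them only by the initial data, not by the fixed-point equations.
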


\begin{proof}
We begin by linearizing the system at the Disease-Free Equilibrium (DFE). This linearization leads to the Metzler matrix:
\begin{equation}
M(w) = B(w) A - \Gamma(w),
\end{equation}
where \( B(w) \) represents the effect of the intervention on the propagation rate, \( A \) is the adjacency matrix of the network, and \( \Gamma(w) \) captures the effect on the recovery rate. By the Perron-Frobenius theorem~\cite{van2002reproduction}, we know that \( M(w) \) is Hurwitz (all eigenvalues have negative real parts) if and only if:
\begin{equation}
\rho(\Gamma(w)^{-1} B(w) A) < 1.
\end{equation}
The argument from monotone systems then implies the dichotomy: if \( \mathcal R_0(w) < 1 \), the rumor-free equilibrium is the only equilibrium, and if \( \mathcal R_0(w) > 1 \), a unique endemic equilibrium exists with a substantial infection \( I^\ast \gg 0 \)~\cite{van2002reproduction, shuai2013global}.
\end{proof}

\subsection{Stability Analysis Under Control}

\begin{theorem}[Local Stability With Control]
If \( \mathcal R_0(w) < 1 \), the rumor-free equilibrium is locally asymptotically stable; if \( \mathcal R_0(w) > 1 \), it is unstable~\cite{wang2003epidemic, van2008virus}.
\end{theorem}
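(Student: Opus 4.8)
The plan is to proceed by linearization, mirroring the uncontrolled analysis but with the control-modified rates, and to treat $w$ here as a fixed (constant) weight vector, consistent with the way $\mathcal R_0(w)$ was defined via the time-frozen next-generation matrix; the genuinely time-varying regime is addressed separately by the forward--backward sweep. First I would compute the Jacobian of the controlled system~\eqref{eq:SIR-controlled} at the rumor-free equilibrium $S=\mathbf 1$, $I=\mathbf 0$, $R=\mathbf 0$. Since every term on the right-hand side of the $S$- and $R$-equations carries a factor $I_j$ or $I_i$, the linearization is block triangular: the infected perturbation $\delta I$ evolves autonomously as $\dot{\delta I}=M(w)\,\delta I$ with $M(w)=B(w)A-\Gamma(w)$, $B(w)=\beta_0\,\mathrm{diag}(1-u w_i)$ and $\Gamma(w)=\gamma_0\,\mathrm{diag}(1+u w_i)$, while $\delta S$ and $\delta R$ are merely driven by $\delta I$ and contribute only zero diagonal blocks.

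Next I would analyze $M(w)$. Under the natural admissibility condition $u w_i\le 1$ it is a Metzler matrix (off-diagonal entries $\beta_0(1-u w_i)A_{ij}\ge 0$), so by Perron--Frobenius theory for Metzler matrices its spectral bound $s(M(w))$ (the largest real part among its eigenvalues) is itself a real eigenvalue, and $s(M(w))<0$ holds if and only if $\rho(\Gamma(w)^{-1}B(w)A)<1$, i.e.\ if and only if $\mathcal R_0(w)<1$. This equivalence is exactly what was already established in the proof of the Existence-Under-Control theorem together with the Spectral-Bound-of-Metzler-Jacobian lemma, so I would simply invoke it. Consequently, when $\mathcal R_0(w)<1$ all eigenvalues of $M(w)$ lie in the open left half-plane and $\delta I$ decays exponentially; when $\mathcal R_0(w)>1$, $M(w)$ has a positive eigenvalue (with nonnegative eigenvector), so $\delta I$ grows and the rumor-free equilibrium is unstable by the principle of linearized stability, with the endemic equilibrium $\mathcal E^\ast$ emerging as guaranteed by the Existence-Under-Control theorem.

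The one point that requires care --- and the main obstacle --- is that the full Jacobian is not Hurwitz even when $\mathcal R_0(w)<1$: the $\delta S$ and $\delta R$ directions contribute $2N$ zero eigenvalues, reflecting the continuum of disease-free states $(S,\mathbf 0,R)$ with $S+R=\mathbf 1$. I would dispose of this by the standard center-manifold reduction: the zero eigenspace is tangent to a manifold consisting entirely of equilibria, hence the center dynamics are trivial, and since $I(t)\to\mathbf 0$ exponentially the driven variables $S(t),R(t)$ converge to finite limits lying on that manifold. This yields local asymptotic stability in the precise (and standard) sense that the infected component vanishes and the trajectory returns to the disease-free set, which is what the statement asserts; the instability branch needs no such refinement, since a single eigenvalue with positive real part already precludes stability.
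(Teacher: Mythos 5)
Your proposal is correct and follows the route the paper itself relies on: the paper states this theorem with only citations and no proof, but its uncontrolled analogue is proved by exactly your argument (restrict the Jacobian at $\mathcal E_0$ to the infected subspace, obtain the Metzler matrix $M(w)=B(w)A-\Gamma(w)$, and use the equivalence $s(M(w))<0 \iff \rho(\Gamma(w)^{-1}B(w)A)=\mathcal R_0(w)<1$ already set up in the Existence-Under-Control proof). The one place you go beyond the paper is the treatment of the $2N$ zero eigenvalues coming from the $\delta S$ and $\delta R$ directions: the paper silently ignores that $\mathcal E_0$ sits on a continuum of disease-free equilibria $(S,\mathbf 0,R)$ with $S+R=\mathbf 1$, so strictly speaking the point $\mathcal E_0$ is not asymptotically stable in the literal Lyapunov sense, and your center-manifold reduction together with the reinterpretation of the conclusion as exponential extinction of $I$ and convergence to the disease-free set is the honest way to make the statement precise. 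This refinement does not change the substance of the result, but it closes a gap the paper leaves open rather than introducing one.
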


\begin{theorem}
Let \( w_i(t) \) be piecewise-continuous. Define the instantaneous effective reproduction number as:
\begin{equation}
R_e(t) = \frac{\beta_0}{\gamma_0}\, \rho\!\left(\mathrm{diag}\!\left(\frac{1 - u w_i(t)}{1 + u w_i(t)}\right) A\right).
\end{equation}
If \( \sup_{t \in [0, T]} R_e(t) < 1 \), then the rumor-free equilibrium is uniformly exponentially stable on \( [0, T] \).
\end{theorem}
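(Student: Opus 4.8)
The plan is to reduce the nonlinear controlled system to a linear time-varying (LTV) comparison system for the infected fractions, and then prove that comparison system is uniformly exponentially stable on $[0,T]$ by means of a Perron--Frobenius Lyapunov function. First, by Lemma~\ref{lem:positivity} every trajectory of \eqref{eq:SIR-controlled} remains in the simplex, so $I_j(t)\ge 0$ and $S_i(t)\le 1$; under the standing assumption $u w_i(t)\le 1$ (so that $\beta_i(t)\ge 0$), substituting $S_i\le 1$ and $(AI)_i\ge 0$ into the $I$-equation gives the cooperative differential inequality $\dot I \le M(t) I$ componentwise, with $M(t)=B(t)A-\Gamma(t)$, $B(t)=\beta_0\,\mathrm{diag}(1-uw_i(t))$, $\Gamma(t)=\gamma_0\,\mathrm{diag}(1+uw_i(t))$. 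Here $M(t)$ is exactly the Jacobian of the $I$-subsystem at the rumor-free equilibrium and is Metzler; since the flow of $\dot z=M(t)z$ preserves the nonnegative orthant and the partial order, the comparison principle yields $0\le I(t)\le z(t)$ with $z(0)=I(0)$. Because $\dot z=M(t)z$ is linear, any bound $\|z(t)\|\le C e^{-\delta t}\|z(0)\|$ transfers verbatim to $I$ for \emph{all} admissible initial data, so it remains to establish uniform exponential stability of the origin for this LTV system.

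Second, I would translate the hypothesis into a uniform spectral gap. Exactly as in the proofs of the earlier theorems, $M(t)$ is a time-frozen Metzler matrix and, by the $M$-matrix / next-generation characterization, $s(M(t))<0$ if and only if $\rho(\Gamma(t)^{-1}B(t)A)=R_e(t)<1$. On each interval where $w(\cdot)$ is continuous the entries of $M(t)$ vary continuously, hence so does the Perron eigenvalue $s(M(t))$ of the (irreducible, for a connected network) Metzler matrix $M(t)$; piecewise-continuity leaves only finitely many breakpoints on the compact interval $[0,T]$, and combining continuity with $\sup_{t\in[0,T]}R_e(t)=1-\eta<1$ yields a uniform margin $s(M(t))\le-\delta<0$ for all $t\in[0,T]$.

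Third, I would build the Lyapunov function from the Perron data. Let $v(t)>0$ be the normalized right Perron eigenvector of $M(t)$, so $M(t)v(t)=s(M(t))v(t)$, and set $V(t,z)=\max_i z_i/v_i(t)$. Along nonnegative solutions, at an index $k$ achieving the maximum one computes $D^+V \le \tfrac{1}{v_k(t)}\big(M(t)v(t)\big)_k\,V + \max_i\big|\dot v_i(t)/v_i(t)\big|\,V = \big(s(M(t))+\max_i|\dot v_i/v_i|\big)V$, using $z_j\le v_j V$ and the nonnegativity of the off-diagonal entries of $M(t)$. On each continuity piece $v(t)$ stays in a fixed compact subset of the open orthant, so the norm equivalences $c_1\|z\|\le V\le c_2\|z\|$ hold with uniform constants; integrating $D^+V\le(-\delta+\text{drift})V$ piece by piece, with a bounded multiplicative factor at the finitely many breakpoints, gives $\|z(t)\|\le C e^{-\delta' t}\|z(0)\|$ on $[0,T]$, hence the same for $I(t)$ --- the asserted uniform exponential stability.

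The main obstacle is the last step: frozen-time Hurwitzness of $M(t)$ does not on its own force stability of the genuinely time-varying system, so the argument must control the rate at which the Perron eigenvector rotates, i.e. the term $\max_i|\dot v_i/v_i|$. I expect to handle this in one of two standard ways: (i) invoke the exponential-stability theorem for slowly varying (or Metzler) LTV systems, treating the weight variation on each continuity piece as a bounded perturbation of a uniformly Hurwitz frozen system; or (ii) add the mild regularity that $w(\cdot)$ is piecewise $C^1$ (or of bounded variation), so that $\int_0^T\max_i|\dot v_i/v_i|\,dt$ is finite and, together with $\delta>0$, still produces net exponential decay over $[0,T]$. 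Either route closes the gap, and the ``$[0,T]$'' in the statement is precisely what makes the finite-horizon estimate --- with its breakpoint factors absorbed into $C$ --- sufficient.
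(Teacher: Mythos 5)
Your proposal is correct and shares the paper's skeleton (linearize the $I$-subsystem at the rumor-free state, observe that the frozen Metzler matrices $M(t)=B(t)A-\Gamma(t)$ are Hurwitz exactly when $R_e(t)<1$, conclude exponential decay), but it goes considerably further than the paper's own argument, which is a three-sentence sketch: ``freeze $w(t)$, each $M(t)$ is Hurwitz with a uniform bound, apply Gr\"onwall.'' The step the paper elides is precisely the one you single out as the main obstacle: pointwise Hurwitzness of a time-varying matrix does not by itself give stability of $\dot z=M(t)z$, and a naive Gr\"onwall estimate would require a negative logarithmic norm (matrix measure) rather than a negative spectral abscissa. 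Your proposal supplies the two missing ingredients: (i) the Kamke-type comparison principle that legitimately reduces the nonlinear controlled dynamics to the cooperative LTV system $\dot z=M(t)z$ (using $S_i\le 1$, positivity, and $uw_i\le 1$), which the paper never states; and (ii) a genuine time-varying Lyapunov function $V(t,z)=\max_i z_i/v_i(t)$ built from the Perron eigenvector, whose derivative estimate exposes the drift term $\max_i|\dot v_i/v_i|$ that must be controlled via slowly-varying-systems theory or a mild piecewise-$C^1$/bounded-variation hypothesis on $w$. That extra hypothesis is not in the theorem statement, but you are right that the finite horizon $[0,T]$ lets the finitely many breakpoint factors and the integrated drift be absorbed into the constant $C$ (indeed, on a compact interval with bounded coefficients the estimate $\|z(t)\|\le Ce^{-\delta t}\|z(0)\|$ is obtainable for a large enough $C$, which somewhat softens the claim). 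In short: the paper's proof asserts the conclusion; yours actually proves it, at the price of one honestly flagged regularity assumption, and also makes explicit the irreducibility (connected network) needed for the strict positivity of the Perron eigenvector.
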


\begin{proof}
Freeze \( w(t) \) to obtain \( \dot I = M(t) I \), where \( M(t) \) is the Metzler matrix that evolves with time. The condition \( \sup_t R_e(t) < 1 \) implies that each \( M(t) \) is Hurwitz with a uniform bound, meaning all the eigenvalues of \( M(t) \) have negative real parts. Applying Grönwall’s inequality~\cite{gronwall1919note} yields exponential decay~\cite{khanafer2016stability, nowzari2016analysis, hethcote2000mathematics}, ensuring that the rumor-free equilibrium remains stable over time.
\end{proof}

\subsection{Research on the optimal control problem}
While stability conditions guarantee whether rumors vanish or persist under given controls, in practice one seeks to determine time-varying control strategies that optimally balance suppression and cost. We therefore formulate the following optimal control problem.

The optimal control problem is formulated as minimizing the total infection prevalence and intervention cost:
\begin{equation}
J(w) = \int_0^T \left( \sum_{i=1}^N I_i(t) + \tfrac{1}{2} c \sum_{i=1}^N w_i(t)^2 \right) dt,
\end{equation}
subject to resource constraints
\begin{equation}
\sum_{i=1}^N w_i(t) \leq W_{\text{total}}, 
\qquad 
w_i(t) \geq 0, \quad \forall t \in [0,T].
\end{equation}

It is worth noting that, unlike existing methods which regulate global-level control parameters, our formulation explicitly introduces node-level control variables $w_i(t)$ into the SIR dynamics. This design enables fine-grained interventions that adapt to the evolving influence of individual nodes. At the same time, the quadratic control cost in the objective function ensures that resource usage is penalized, thereby embedding a principled trade-off between suppression effectiveness and intervention efficiency directly into the optimization process.

\begin{lemma}[Existence of Optimal Control~\cite{zhang2016optimal,lenhart2007optimal}]
\label{lem:existence-optimal}
For the controlled system~\eqref{eq:SIR-controlled} under admissible initial conditions, there exists an optimal control 
\begin{equation}
w^{\ast}(t) = \big(w^{\ast}_1(t), w^{\ast}_2(t), \dots, w^{\ast}_N(t)\big),
\end{equation}
such that 
\begin{equation}
J(w^{\ast}(t)) = \min_{w(t)\in U} J(w(t)),
\end{equation}
where the admissible control set $U$ is defined as
\begin{equation}
U = \Bigg\{\, w(t) \in L^{\infty}([0,T]; \mathbb{R}^N) \;\Bigg|\;
\begin{aligned}
& \sum_{i=1}^N w_i(t) \leq W_{\text{total}}, \\ 
& w_i(t) \geq 0, \;\; \forall t \in [0,T]
\end{aligned}
\Bigg\}.
\end{equation}

\end{lemma}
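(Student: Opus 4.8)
The plan is to apply the direct method of the calculus of variations, following the classical existence framework for optimal control of ODE systems (Fleming--Rishel / Cesari, in the form used by Lenhart--Workman). Concretely, one verifies four ingredients: (i) the admissible set $U$ is nonempty, closed, bounded, and convex; (ii) for every $w \in U$ the controlled system~\eqref{eq:SIR-controlled} admits a unique solution whose trajectory lies in a fixed compact set; (iii) the right-hand side of the state equations is affine in the control and uniformly bounded on that set; and (iv) the running cost $L(I,w) = \sum_i I_i + \tfrac12 c \sum_i w_i^2$ is convex in $w$ and coercive, i.e.\ $L(I,w) \ge \tfrac{c}{2}\|w\|_2^2$ since $I_i \ge 0$.

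First I would check (i): $w \equiv 0 \in U$, so $U \neq \emptyset$; the constraints $\sum_i w_i(t) \le W_{\text{total}}$ and $w_i(t) \ge 0$ are preserved under pointwise convex combinations and under $L^2$-limits, and together they force $0 \le w_i(t) \le W_{\text{total}}$ a.e., so $U$ is a bounded, closed, convex subset of $L^2([0,T];\mathbb{R}^N)$, hence weakly sequentially compact. For (ii)--(iii), Lemma~\ref{lem:positivity} already confines $(S,I,R)$ to the compact simplex $[0,1]^{3N}$; since $A$ is a fixed finite matrix and $w$ is bounded, the vector field in~\eqref{eq:SIR-controlled} is globally Lipschitz on this set and uniformly bounded, so each admissible trajectory is uniformly Lipschitz in $t$, and by Arzel\`a--Ascoli the set of admissible state trajectories is precompact in $C([0,T];\mathbb{R}^{3N})$.

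Next I would run the minimizing-sequence argument. Let $\{w^n\} \subset U$ with $J(w^n) \to \inf_{w \in U} J(w)$, and let $(S^n,I^n,R^n)$ be the corresponding states. By weak compactness pass to a subsequence with $w^n \rightharpoonup w^\ast$ in $L^2$, and by Arzel\`a--Ascoli pass to a further subsequence with $(S^n,I^n,R^n) \to (S^\ast,I^\ast,R^\ast)$ uniformly on $[0,T]$. The key point is to show that $(S^\ast,I^\ast,R^\ast)$ is the state generated by $w^\ast$: writing~\eqref{eq:SIR-controlled} in integral form, every nonlinear term is a product of a factor converging \emph{strongly} (e.g.\ $S_i^n (AI^n)_i \to S_i^\ast (AI^\ast)_i$ uniformly) with either a constant or the control $w_i^n$ converging \emph{weakly}; the weak-times-strong product then converges weakly in $L^1$, which suffices to pass to the limit inside the time integral and identify the limiting trajectory as the solution for $w^\ast$. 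Finally, $w \mapsto \int_0^T \tfrac12 c\|w\|_2^2\,dt$ is convex and strongly continuous, hence weakly lower semicontinuous, while $I^n \to I^\ast$ uniformly handles the linear term, so $J(w^\ast) \le \liminf_n J(w^n) = \inf_U J$; since $U$ is convex and closed it is weakly closed, so $w^\ast \in U$, and therefore $w^\ast$ is optimal.

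The step I expect to be the main obstacle is the identification of the limit trajectory with the solution driven by $w^\ast$ --- that is, handling the bilinear coupling $w_i^n S_i^n (AI^n)_i$ when only $w^n \rightharpoonup w^\ast$ weakly. This requires the weak-times-strong convergence lemma together with the uniform (strong) convergence of the states obtained from Arzel\`a--Ascoli; one must also be careful that the control set $U$, although only weakly (not strongly) closed a priori, still contains $w^\ast$, which follows from its convexity via Mazur's lemma.
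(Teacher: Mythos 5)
Your argument is correct, but note that the paper itself gives no proof of Lemma~\ref{lem:existence-optimal}: it simply cites the standard existence results (Fleming--Rishel/Cesari, as packaged in Lenhart--Workman) and moves on to the Pontryagin conditions. What you have written out is precisely the argument those citations stand in for, and the details check out: $U$ is nonempty, convex, and bounded (the constraints force $0\le w_i(t)\le W_{\text{total}}$ pointwise), hence weakly sequentially compact in $L^2$ and weakly closed by convexity; Lemma~\ref{lem:positivity} confines the states to a compact set on which the vector field of~\eqref{eq:SIR-controlled} is Lipschitz and bounded, giving Arzel\`a--Ascoli precompactness of the trajectories; and the running cost is convex in $w$, hence weakly lower semicontinuous. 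You also correctly isolate the one genuinely delicate step --- passing to the limit in the bilinear term $w_i^n S_i^n (A I^n)_i$ --- and the weak-times-strong product argument works exactly because the dynamics are \emph{affine} in $w$ (the control enters only through $1\pm u w_i$); had the control entered nonlinearly, weak convergence of $w^n$ would not suffice and one would need the Cesari-type convexity of the extended velocity set instead. Two minor remarks: coercivity of the cost is not actually needed here since $U$ is already bounded, and the closedness of $U$ under weak limits can be obtained either via Mazur as you suggest or directly from the fact that $\{w: w(t)\in K \text{ a.e.}\}$ is weakly closed for compact convex $K$. In short, your proposal supplies a complete and correct proof where the paper offers only a pointer to the literature.
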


Once the existence of optimal controls has been established, the necessary conditions for optimality are obtained via Pontryagin’s Maximum Principle~\cite{lenhart2007optimal}.

\begin{theorem}\label{thm:SIR-optimal}
Let $(S^*(t), I^*(t), R^*(t))$ be the optimal state trajectories corresponding to the optimal control $w_i^*(t)$ in the controlled SIR network system~\eqref{eq:SIR-controlled}. Then there exist adjoint variables $\lambda_{1i}(t), \lambda_{2i}(t), \lambda_{3i}(t)$ and a Lagrange multiplier $\lambda_4(t)$~\cite{lagrange1813theorie} such that
\begin{equation}
\left\{
\begin{aligned}
\dfrac{d\lambda_{1i}(t)}{dt} &= (\lambda_{1i}(t)-\lambda_{2i}(t))\, \beta_0 (1 - u w_i^*(t)) \sum_{j=1}^{N} A_{ij} I_j^*(t), \\[6pt]
\dfrac{d\lambda_{2i}(t)}{dt} &= -1 + \lambda_{2i}(t)\gamma_0(1+u w_i^*(t)) \\
&\quad - (\lambda_{1i}(t)-\lambda_{2i}(t))\beta_0 (1 - u w_i^*(t)) S_i^*(t) \sum_{j=1}^{N} A_{ij} \\
&\quad - \lambda_{3i}(t)\gamma_0(1+u w_i^*(t)), \\[6pt]
\dfrac{d\lambda_{3i}(t)}{dt} &= 0, \\[6pt]
\dfrac{d\lambda_4(t)}{dt} &= -\sum_{i=1}^{N} w_i^*(t) + W_{\text{total}},
\end{aligned}
\right.
\end{equation}
with transversality conditions

\begin{equation}
\lambda_{1i}(T) = \lambda_{2i}(T) = \lambda_{3i}(T) = 0, \quad i=1,2,\dots,N.
\end{equation}

Moreover, the optimal control $w_i^*(t)$ has the explicit form
\begin{equation}
\begin{aligned}
w_i^*(t) &= \max\Biggl\{0,\; \min\Biggl\{ 
   -\frac{1}{c}\Bigl[ \beta_0 u S_i^*(t)\!\!\sum_{j=1}^{N} A_{ij} I_j^*(t)(\lambda_{1i}(t) 
   \\
   &\qquad - \lambda_{2i}(t)) 
   - \gamma_0 u I_i^*(t)(\lambda_{2i}(t) - \lambda_{3i}(t)) 
   + \lambda_4(t) \Bigr],\,1 
   \Biggr\}\Biggr\}.
\end{aligned}
\end{equation}

\end{theorem}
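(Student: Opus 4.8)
The plan is to apply Pontryagin's Maximum Principle~\cite{lenhart2007optimal}. Lemma~\ref{lem:existence-optimal} already guarantees an optimal pair $(w^{*},(S^{*},I^{*},R^{*}))$, so it suffices to derive the necessary conditions along this optimum. First I would introduce adjoint variables $\lambda_{1i},\lambda_{2i},\lambda_{3i}$ associated with $S_i,I_i,R_i$ and a multiplier $\lambda_4$ for the budget constraint $\sum_i w_i(t)\le W_{\text{total}}$, and form the pointwise Hamiltonian
\begin{align*}
H &= \sum_{i=1}^N I_i + \tfrac{c}{2}\sum_{i=1}^N w_i^2
 - \sum_{i=1}^N \lambda_{1i}\,\beta_0(1-u w_i)S_i\!\sum_{j=1}^N A_{ij}I_j \\
 &\quad + \sum_{i=1}^N \lambda_{2i}\Big(\beta_0(1-u w_i)S_i\!\sum_{j=1}^N A_{ij}I_j - \gamma_0(1+u w_i)I_i\Big) \\
 &\quad + \sum_{i=1}^N \lambda_{3i}\,\gamma_0(1+u w_i)I_i + \lambda_4\Big(\sum_{i=1}^N w_i - W_{\text{total}}\Big),
\end{align*}
retaining $w_i(t)\ge 0$ as an explicit control bound and imposing on $\lambda_4(t)\ge 0$ the complementary-slackness relation $\lambda_4(t)\big(\sum_i w_i(t)-W_{\text{total}}\big)=0$.

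Next I would read off the costate dynamics from $\dot\lambda_{ki}=-\partial H/\partial x_{ki}$ and collect terms. The equation for $\lambda_{1i}$ comes from $\partial H/\partial S_i$: the transmission term appears with opposite signs in the $S$- and $I$-equations, so its two contributions combine into the factor $(\lambda_{1i}-\lambda_{2i})\,\beta_0(1-u w_i^{*})\sum_j A_{ij}I_j^{*}$. The equation for $\lambda_{2i}$ comes from $\partial H/\partial I_i$: the prevalence integrand yields the constant $-1$, the recovery terms yield $\lambda_{2i}\gamma_0(1+u w_i^{*})-\lambda_{3i}\gamma_0(1+u w_i^{*})$, and the transmission terms yield the coupling feedback $-(\lambda_{1i}-\lambda_{2i})\beta_0(1-u w_i^{*})S_i^{*}\sum_j A_{ij}$. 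Since $R$ is absent from every right-hand side, $\partial H/\partial R_i=0$, i.e.\ $\dot\lambda_{3i}=0$; the relation $\dot\lambda_4 = W_{\text{total}}-\sum_i w_i^{*}$ follows by adjoining the cumulative resource balance as an auxiliary state. Because the terminal states are free and $J$ carries no terminal payoff, the transversality conditions are $\lambda_{1i}(T)=\lambda_{2i}(T)=\lambda_{3i}(T)=0$.

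Finally I would invoke the minimization condition: pointwise in $t$, $w^{*}(t)$ minimizes $H$ over the admissible set. Since $H$ is separable and strictly convex and quadratic in each $w_i$ with leading coefficient $c>0$, the unconstrained minimizer solves
\begin{equation*}
\partial H/\partial w_i = c\,w_i + \beta_0 u S_i^{*}\!\sum_j A_{ij}I_j^{*}(\lambda_{1i}-\lambda_{2i}) - \gamma_0 u I_i^{*}(\lambda_{2i}-\lambda_{3i}) + \lambda_4 = 0,
\end{equation*}
and projecting this candidate onto the feasible interval $[0,1]$---lower clip from $w_i\ge 0$, upper clip from keeping $\beta_i,\gamma_i$ admissible---gives exactly the stated $\max\{0,\min\{\cdot,1\}\}$ formula, with the common offset $\lambda_4(t)$ realizing the KKT characterization of the budget projection. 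I expect the real difficulty to lie not in this (routine) differentiation but in the rigorous treatment of the two coupled pointwise constraints within PMP: establishing existence, nonnegativity, and complementary slackness of $\lambda_4(t)$, justifying the auxiliary-state reformulation of the resource cap, and verifying the measurability and regularity of $w^{*}$ needed to differentiate $H$ along the optimal trajectory.
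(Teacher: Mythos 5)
Your proposal follows essentially the same route as the paper: invoke Pontryagin's Maximum Principle, form the Hamiltonian with adjoints $\lambda_{1i},\lambda_{2i},\lambda_{3i}$ and the budget multiplier $\lambda_4$, read off the costate equations from the partial derivatives in $S_i$, $I_i$, $R_i$, impose zero terminal adjoints, and obtain $w_i^*$ by solving $\partial H/\partial w_i=0$ and clipping to $[0,1]$. Your computation of $\partial H/\partial w_i$ is correct (indeed written more carefully than the paper's intermediate expression), and your added remarks on complementary slackness of $\lambda_4$ and the auxiliary-state reading of $\dot\lambda_4$ are refinements of, not departures from, the paper's argument.
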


\begin{proof}
The proof follows from Pontryagin’s Maximum Principle~\cite{lenhart2007optimal}. The Hamiltonian is constructed as follows:
\begin{equation}
\begin{aligned}
H ={}& \sum_{i=1}^N I_i(t) + \frac{1}{2} c \sum_{i=1}^N (u w_i(t))^2 + \sum_{i=1}^N \Biggl[ \lambda_{1i}\biggl(-\beta_0 (1 - u w_i) S_i \sum_{j=1}^N A_{ij} I_j \biggr) \\
&+\lambda_{2i}\biggl(\beta_0 (1 - u w_i) S_i\sum_{j=1}^N A_{ij} I_j - \gamma_0 (1 + u w_i) I_i\biggr)\\ 
&+ \lambda_{3i}\bigl(\gamma_0 (1 + u w_i) I_i\bigr) \Biggr] 
+ \lambda_4 \biggl(\sum_{i=1}^N w_i - W_{\text{total}}\biggr).
\end{aligned}
\end{equation}

To derive the necessary conditions for optimality, we differentiate the Hamiltonian \( H \) with respect to the state variables \( S_i \), \( I_i \), and \( R_i \) to obtain the adjoint system. The first-order conditions for optimality lead to the following system of differential equations for the adjoint variables.
Adjoint equation for \( \lambda_{1i}(t) \):Differentiating the Hamiltonian with respect to \( S_i \), we get
\begin{equation}
\begin{aligned}
\frac{\partial H}{\partial S_i} &= -\lambda_{1i}(t) \beta_0 (1 - u w_i(t)) \sum_{j=1}^N A_{ij} I_j(t) + \lambda_{2i}(t) \beta_0 (1 - u w_i(t)) \sum_{j=1}^N A_{ij} I_j(t).
\end{aligned}
\end{equation}
which leads to the adjoint equation:
\begin{equation}
\frac{d\lambda_{1i}(t)}{dt} = (\lambda_{1i}(t) - \lambda_{2i}(t)) \beta_0 (1 - u w_i^*(t)) \sum_{j=1}^N A_{ij} I_j^*(t).
\end{equation}

This equation describes how the adjoint variable \( \lambda_{1i}(t) \), associated with the infected state \( I_i(t) \), evolves over time.

Adjoint equation for \( \lambda_{2i}(t) \):Similarly, differentiating the Hamiltonian with respect to \( I_i \), we get
\begin{equation}
\begin{aligned}
\frac{\partial H}{\partial I_i} &= \lambda_{1i}(t) \beta_0 (1 - u w_i(t)) S_i(t) \sum_{j=1}^N A_{ij} I_j(t) \\
& + \lambda_{2i}(t) \beta_0 (1 - u w_i(t)) S_i(t) \sum_{j=1}^N A_{ij} I_j(t) \\
& - \lambda_{2i}(t) \gamma_0 (1 + u w_i(t)) I_i(t).
\end{aligned}
\end{equation}
which leads to the adjoint equation:
\begin{equation}
\begin{aligned}
\frac{d\lambda_{2i}(t)}{dt} &= -1 + \lambda_{2i}(t) \gamma_0 (1 + u w_i^*(t)) - (\lambda_{1i}(t) - \lambda_{2i}(t)) \\
&\beta_0 (1 - u w_i^*(t)) S_i^*(t) \sum_{j=1}^N A_{ij} - \lambda_{3i}(t) \gamma_0 (1 + u w_i^*(t)).
\end{aligned}
\end{equation}

Adjoint equation for \( \lambda_{3i}(t) \):Differentiating the Hamiltonian with respect to \( R_i \), we get
\begin{equation}
\frac{\partial H}{\partial R_i} = \lambda_{3i}(t) \gamma_0 (1 + u w_i(t)),
\end{equation}
which gives the adjoint equation:
\begin{equation}
\frac{d\lambda_{3i}(t)}{dt} = 0.
\end{equation}

Finally, the first-order derivative of \( H \) with respect to \( w_i \) is:
\begin{equation}
\begin{aligned}
\frac{\partial H}{\partial w_i} &= \sum_{i=1}^N \left[ \lambda_{1i}(t) \left( \beta_0 (1 - u w_i(t)) S_i(t) \sum_{j=1}^N A_{ij} I_j(t) \right) \right. \\
& \quad  - \lambda_{2i}(t) \gamma_0 (1 + u w_i(t)) I_i(t) \Bigr],
\end{aligned}
\end{equation}

which leads to the optimal control equation: \( \frac{\partial H}{\partial w_i} = 0 \), so
\begin{equation}
\begin{aligned}
w_i^*(t) &= \max \left\{ 0, \min \left\{ - \frac{1}{c} \left[ \beta_0 u S_i^*(t) \sum_{j=1}^N A_{ij} I_j^*(t) (\lambda_{1i}(t) \right. \right. \right. \\
& \quad - \lambda_{2i}(t)) - \gamma_0 u I_i^*(t) (\lambda_{2i}(t) - \lambda_{3i}(t)) + \lambda_4(t) \Bigr], 1 \Biggr\}\Biggr\}.
\end{aligned}
\end{equation}

The transversality conditions follow from the fact that at terminal time \( T \), the adjoint variables \( \lambda_{1i}(T) \), \( \lambda_{2i}(T) \), and \( \lambda_{3i}(T) \) must all be zero, as there is no terminal cost associated with the state variables:
\begin{equation}
\lambda_{1i}(T) = \lambda_{2i}(T) = \lambda_{3i}(T) = 0, \quad i = 1, 2, \dots, N.
\end{equation}

Thus, we have derived the necessary conditions for the optimal control, and the explicit form of the optimal control \( w_i^*(t) \) is given by the equation above.
\end{proof}

\begin{algorithm}[H]
\caption{Forward--Backward Sweep Algorithm.}
\label{alg:generic-fbs}
\begin{algorithmic}[1]
\Require Initial state $y_0$, initial control $w^{(0)}(t)$, time grid $\{t_k\}_{k=0}^{K}$ on $[0,T]$, tolerance $\varepsilon$, smoothing $\tau\in(0,1]$, max iterations $M$
\Ensure Control $w^\star(t)$
\State \textbf{Initialization:} set $m\gets 0$; compute initial state trajectory under $w^{(0)}$; set terminal adjoints (e.g., $\lambda(T)=0$)
\Repeat
  \State \textbf{Forward sweep:} with $w^{(m)}$, integrate the state ODEs on $[t_k,t_{k+1}]$, $k=0,\dots,K-1$, to obtain $y^{(m)}(t)$
  \State \textbf{Backward sweep:} set $\lambda_1(T)=\lambda_2(T)=\lambda_3(T)=0$; integrate the adjoint ODEs backward for $k=K-1,\dots,0$ to obtain $\lambda^{(m)}(t)$
  \State \textbf{Optimality step:} for each $t_k$, solve the first–order condition
        \[
          \frac{\partial H}{\partial w}\big(y^{(m)}(t_k),\lambda^{(m)}(t_k),w\big)=0
        \]
        to get a candidate $\tilde w(t_k)$ (or take a gradient step toward minimizing the Hamiltonian)
  \State \textbf{Update:} optionally relax $w^{(m+1)}(t_k)\leftarrow \tau\,w^{(m)}(t_k)+(1-\tau)\,\tilde w(t_k)$ with $\tau\in(0,1)$
  \State \textbf{Stopping test:} if
        \[
           \frac{\|w^{(m+1)}-w^{(m)}\|}{\|w^{(m)}\|} < \varepsilon
        \]
        then \textbf{break}; else set $m\gets m+1$
\Until{$m=M$}
\State \Return $w^\star(t)\gets w^{(m)}(t)$
\end{algorithmic}
\end{algorithm}

Based on the above theory, the forward-backward scanning Algorithm~\ref{alg:generic-fbs} is used to optimize the control strategy and reduce the infection density. 

By solving the system of differential equations governing the SIR model and applying the control strategies iteratively, the algorithm determines the optimal control input for each time step. This approach is designed to minimize the infection spread while adhering to resource constraints. The algorithm can be applied with different control strategies to compare their performance in controlling the epidemic.

\begin{table}[H]
    \centering
    \captionsetup{font=small} 
    \caption{Basic statistics of the real datasets. Here, \(N\) and \(M\) denote the total number of nodes and edges, respectively, \(\langle k \rangle\) and \(\langle L \rangle\) represent the mean degree and mean shortest path length, \(C\) is the mean clustering coefficient~\cite{watts1998collective}, \(r\) is the degree assortativity coefficient, and \(H\) is the heterogeneity index (coefficient of variation of degree distribution).}
    \setlength{\tabcolsep}{1pt}
    \renewcommand{\arraystretch}{1.05}
    \normalsize
    \begin{tabularx}{\linewidth}{ccccccccc}
    \toprule
    Dataset &  Abbreviation& \(N\) & \(M\) & \(\langle k \rangle\) & \(C\) & \(\langle L \rangle\) & \(r\) & \(H\) \\
    \midrule
    BA\_network & BA&500 & 1491 & 5.9640 & 0.0466 & 3.261 & -0.0624 & 1.0416\\
    WS\_network & WS&500 & 2500 & 10.0 & 0.4997 & 4.0035 & -0.0151 & 0.0990\\
    ER\_network &ER & 500 & 6125 & 24.5 & 0.0493 & 2.2381 &-0.0043&0.2001\\
    Dolphins & DP&62 & 159 & 5.129 & 0.259 & 3.357 &-0.0436&0.5716\\
    Karate & KT&34 & 78 & 4.5882 & 0.5706 & 2.4082&-0.0456&0.8326 \\
    
    Sociopatterns &SC& 410 & 2765 & 13.4878 & 0.4558 & 3.6309 & 0.2258 & 0.6226\\
    Facebook &FB& 1266 & 6451 & 10.1912 & 0.0684 & 3.3103 &-0.0844&1.2989\\
    Email & EM& 1133 & 5451 & 9.6222 & 0.2202 & 3.606 & 0.0782 & 0.9706\\
    Digg & DG& 30398 & 86312 & 5.6788 & 0.0053 & 4.6731 & 0.0083 & 1.9883\\
    Enron & EN& 36692 & 183831 & 10.0202 & 0.4970 & 4.0252 & -0.1108 & 3.6027\\
    BlogCatalog3 & BL& 10312 & 333983 & 64.7756 & 0.4632 & 2.3824 & -0.2541 & 2.7433\\

    \bottomrule
\end{tabularx}
\label{tab:network_topology}
\end{table}

\section{\label{sec:simulation}Simulation and Analysis}
To comprehensively evaluate the proposed adaptive optimal control strategy, three complementary experiments are desined from the perspectives of (i) micro-level correlations, (ii) weight allocation patterns, and (iii) macro-level spreading dynamics. These experiments together provide a systematic understanding of how the strategy allocates resources and suppresses rumor propagation.

The datasets used in the experiment are as follows. Two canonical synthetic topologies with $N=500$ nodes each are generated. The Barabási–Albert (BA) scale-free model~\cite{barabasi1999emergence} captures hub-dominated, power-law degree distributions typical of online platforms, providing a heterogeneous baseline where a few nodes possess disproportionate connectivity. The Watts–Strogatz (WS) small-world model~\cite{watts1998collective} exhibits high clustering and short average path lengths, reflecting the local cohesion and efficient navigation observed in many social systems. The Erdős–Rényi (ER) random-graph model~\cite{erdds1959random,erdHos1960evolution} providing a homogeneously mixed baseline in which edges are placed independently with $p$ probability.

Seven real networks spanning social, communication, and face-to-face contact datasets are considered. Zachary's Karate Club network~\cite{zachary1977information}, a classic benchmark for community detection; the Lusseau's Dolphins social network~\cite{lusseau2003bottlenose}, illustrating small-world properties in animal societies; the Génois et al. SocioPatterns dataset~\cite{genois2015data}, featuring high-resolution temporal contact data for epidemic modeling. The Ia-fb-messages network~\cite{rossi2015network}, a Facebook-like social network derived from an online community for students at the University of California, Irvine, comprising users who have sent or received at least one message. The organizational Email network~\cite{guimera2003self} reflects internal email communications with meso-scale and hierarchical features; an undirected projection is used to capture communication ties. The Digg interaction network~\cite{de2009social} encodes user interactions on a social news platform; a static, undirected projection of user–user interactions is constructed. The BlogCatalog3 friendship network~\cite{leskovec2009community}, a blogging community graph where nodes represent blogger accounts and edges denote undirected friendship ties, with the edges file listing unweighted pairs and auxiliary files providing group-membership information. Finally, the Enron email communication network~\cite{tang2009relational}, depicting organizational email exchanges in which nodes correspond to email addresses and edges capture directed communications, with the edges file recording and, in some releases, timestamps or aggregated message counts.

These networks differ in size, clustering, and degree heterogeneity, which allows us to test the robustness and adaptability of our control strategy across diverse topologies. The specific information of each network dataset is shown in the following table~\ref{tab:network_topology}. These statistics include the number of nodes \(N\), the number of links \(M\), the average degree \(\langle k \rangle\), average clustering coefficient \(C\), mean shortest path length \(\langle L \rangle\), and we also report two important network characteristics: degree assortativity (\(r\)) and heterogeneity index (\(H\)). 

The degree assortativity coefficient \(r\) quantifies the tendency of nodes to connect to others of similar degree. It is defined as the Pearson correlation between the degrees at the two ends of a uniformly random edge~\cite{newman2002assortative}:
\begin{equation}
r \;=\;
\frac{\operatorname{Cov}\!\bigl(K^{(1)},K^{(2)}\bigr)}
     {\sqrt{\operatorname{Var}\!\bigl(K^{(1)}\bigr)\,
            \operatorname{Var}\!\bigl(K^{(2)}\bigr)}},
\end{equation}
where \(K^{(1)}\) and \(K^{(2)}\) denote the degrees of the two endpoints of a randomly chosen edge. Positive \(r\) indicates assortative mixing (high-degree nodes tend to connect to other high-degree nodes), whereas negative \(r\) indicates disassortative mixing (high-degree nodes tend to connect to low-degree nodes).

The heterogeneity index \(H\) quantifies the unevenness of degree distribution and is calculated as the coefficient of variation of the degree distribution:
\begin{equation}
H = \frac{\sigma(k)}{\mu(k)} = \frac{\sqrt{\frac{1}{N}\sum_{i=1}^{N}(k_i - \mu(k))^2}}{\frac{1}{N}\sum_{i=1}^{N}k_i}
\label{eq:heterogeneity}
\end{equation}
where \(\sigma(k)\) is the standard deviation of the degree distribution, \(\mu(k)\) is the mean degree, and \(N\) is the total number of nodes. Higher values of \(H\) indicate greater heterogeneity in node connectivity patterns, with scale-free networks typically exhibiting high heterogeneity due to the presence of hub nodes~\cite{albert2002statistical}.

\subsection{\label{sec:correlation}Correlation Between Node Weights and Centrality}
The first experiment aims to investigate the relationship between the control weights $w_i$ assigned to each node and their structural importance in the network. Specifically, we select three widely used centrality measures: degree centrality (DC)~\cite{newman2018networks}, betweenness centrality (BC)~\cite{freeman1977set}, closeness centrality (CC)~\cite{freeman1978centrality}.

To facilitate the interpretation of the weight-center relationship, we select several smaller networks in the dataset (karate, Dolphin, the small Email network, Sociopatterns, and two synthetic networks) in section~\ref{sec:correlation} and section~\ref{sec:visualization}. Among them, the node roles are easier to check visually and statistically.

To quantitatively measure the correlations between $w_i$ and these metrics, we employ the Pearson correlation coefficient~\cite{benesty2009pearson}, a widely used measure of linear dependence between two variables. The coefficient is defined as:
\begin{equation}
r = \frac{\sum_{i=1}^N (x_i - \bar{x})(y_i - \bar{y})}
         {\sqrt{\sum_{i=1}^N (x_i - \bar{x})^2}\sqrt{\sum_{i=1}^N (y_i - \bar{y})^2}},
\label{pearson}
\end{equation}
where $x_i$ and $y_i$ represent the paired observations (here, centrality values and control weights), $\bar{x}$ and $\bar{y}$ denote their sample means, and $N$ is the number of nodes. 

By computing the Pearson correlation coefficient for each network and each centrality metric, we can track how the alignment between structural importance (as measured by centrality) and resource allocation (as reflected in $w_i(t)$) evolves over time. This allows us to reveal network-specific patterns---for instance, whether high-degree hubs~\cite{shi2025cycrank,fan2017generalization} in scale-free networks or bridging nodes in social networks dominate the allocation in different propagation stages.

\begin{figure*}[htbp]
\centering

\subfloat[DP\label{fig:dolphins_corr}]{
  \includegraphics[width=0.31\textwidth]{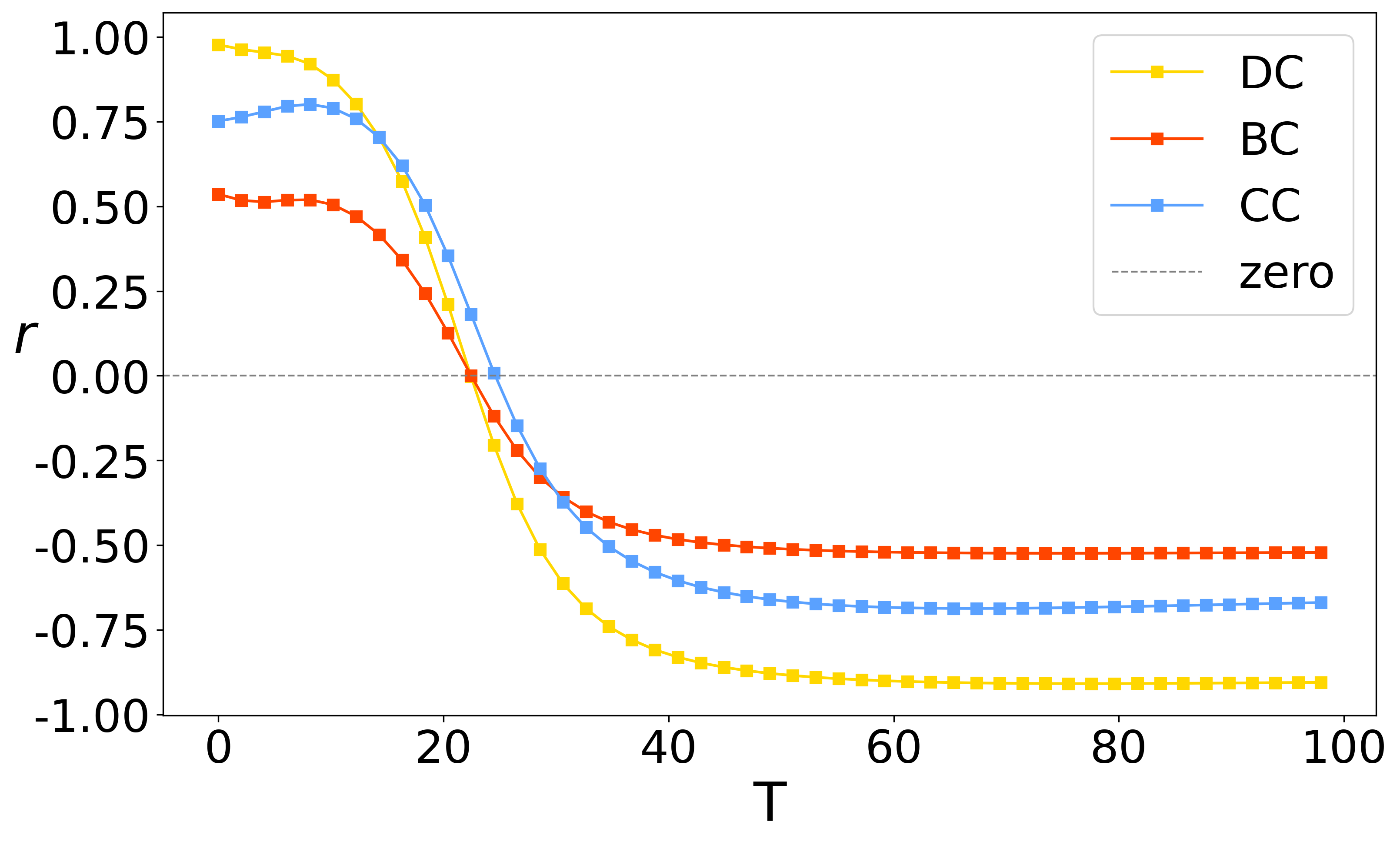}
}\hfill
\subfloat[KT\label{fig:karate_corr}]{
  \includegraphics[width=0.31\textwidth]{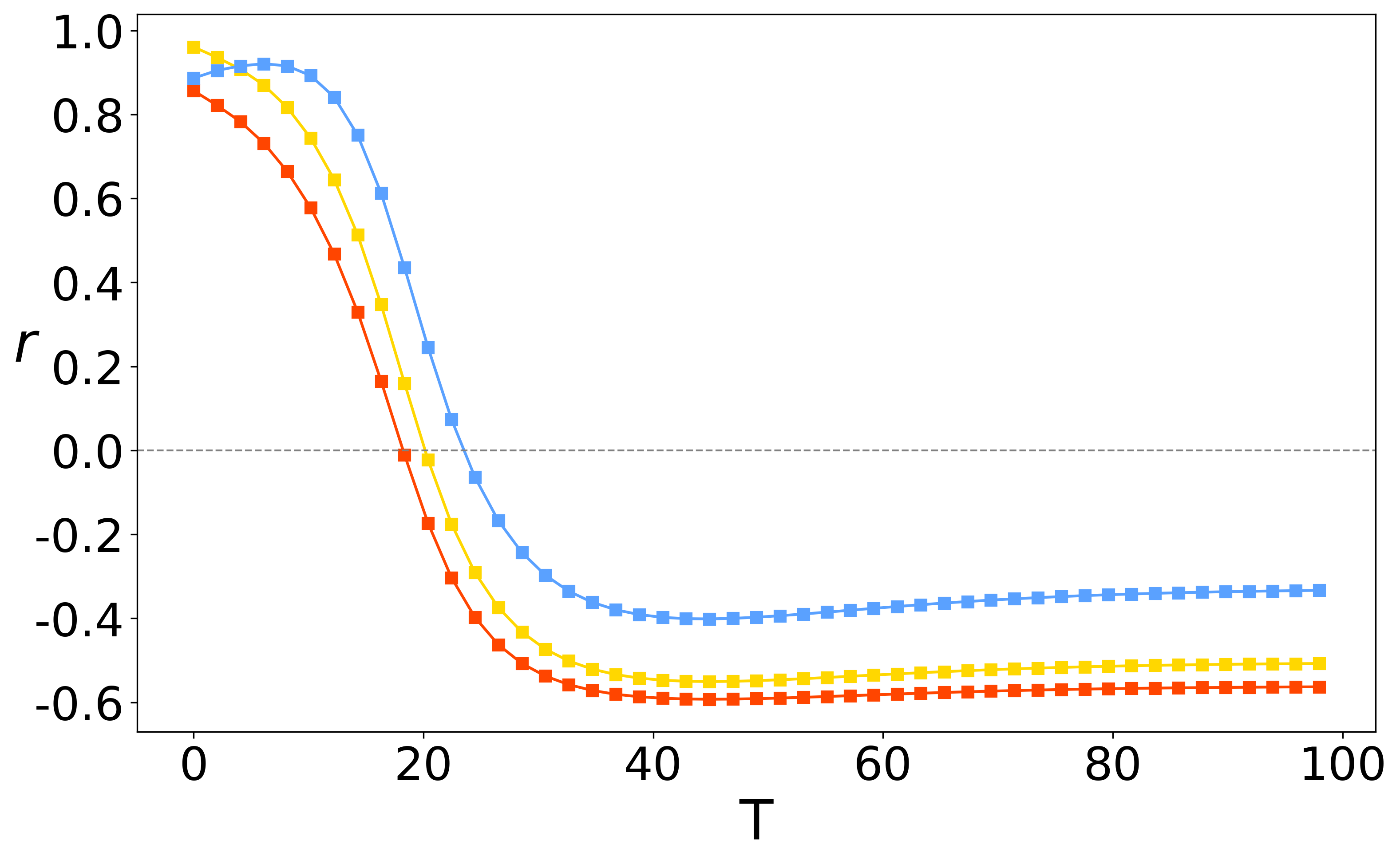}
}\hfill
\subfloat[EM\label{fig:Email_corr}]{
  \includegraphics[width=0.31\textwidth]{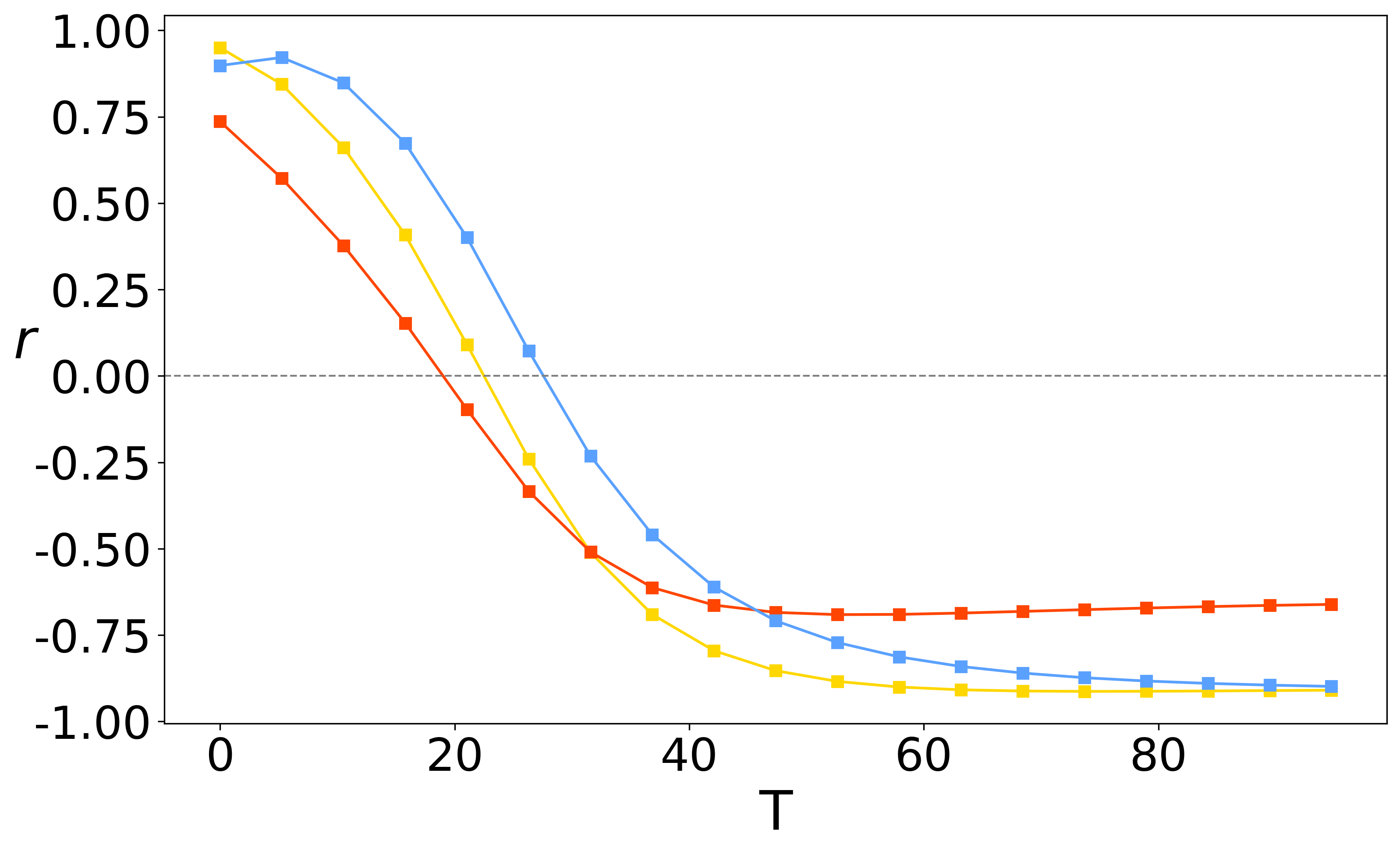}
}

\par\medskip 

\subfloat[SC\label{fig:sociopatterns_corr}]{
  \includegraphics[width=0.31\textwidth]{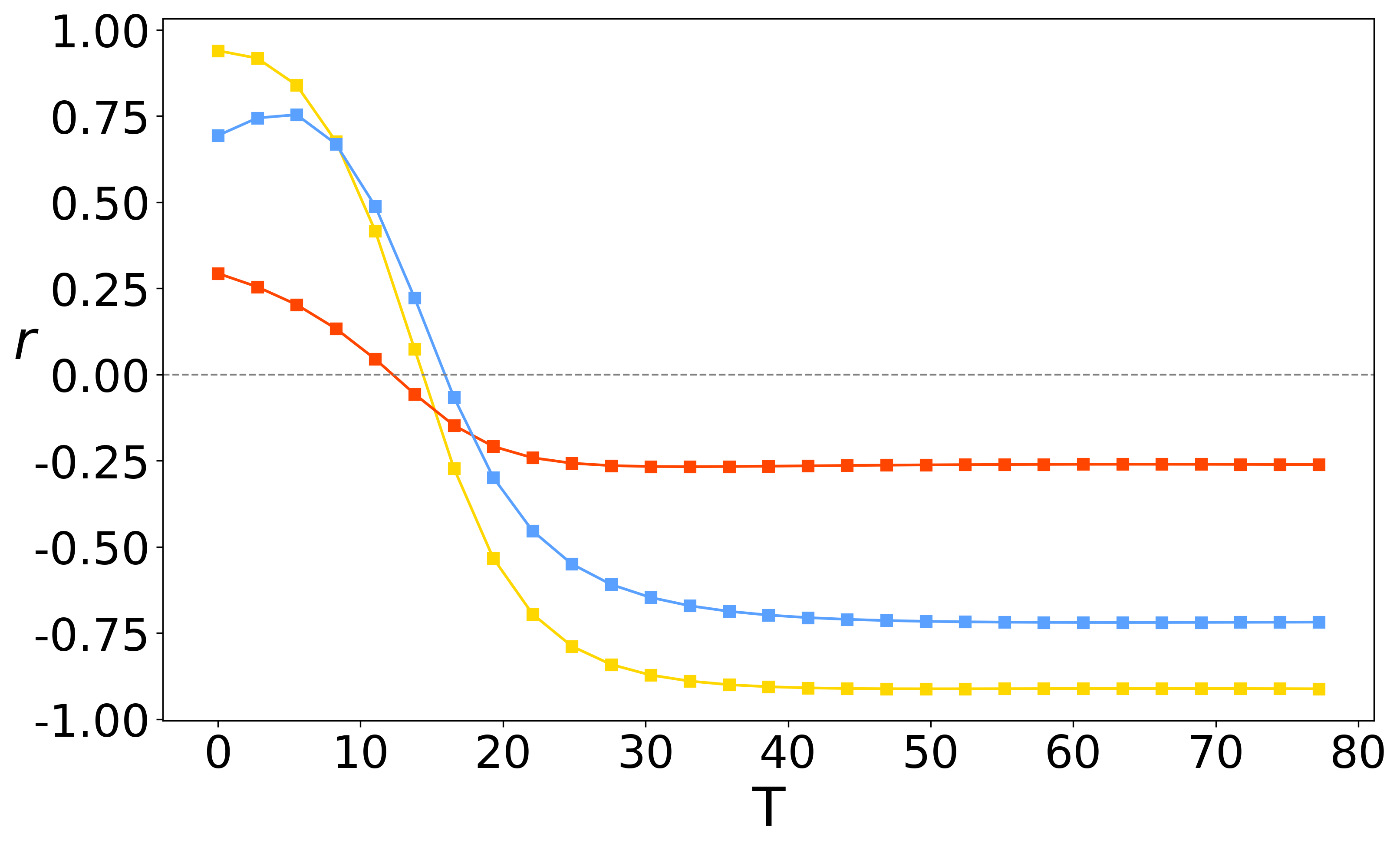}
}\hfill
\subfloat[BA\label{fig:ba_corr}]{
  \includegraphics[width=0.31\textwidth]{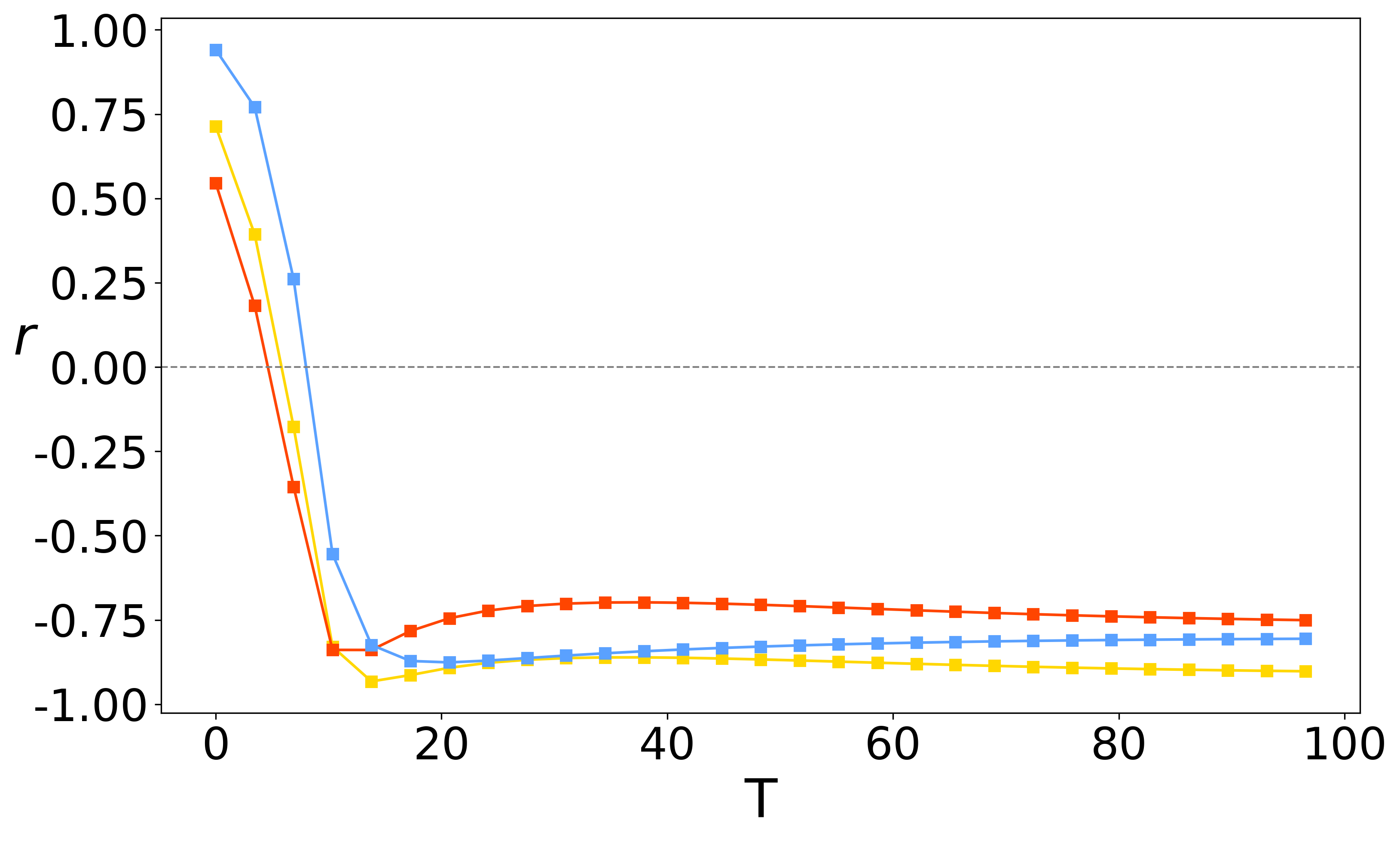}
}\hfill
\subfloat[WS\label{fig:ws_corr}]{
  \includegraphics[width=0.31\textwidth]{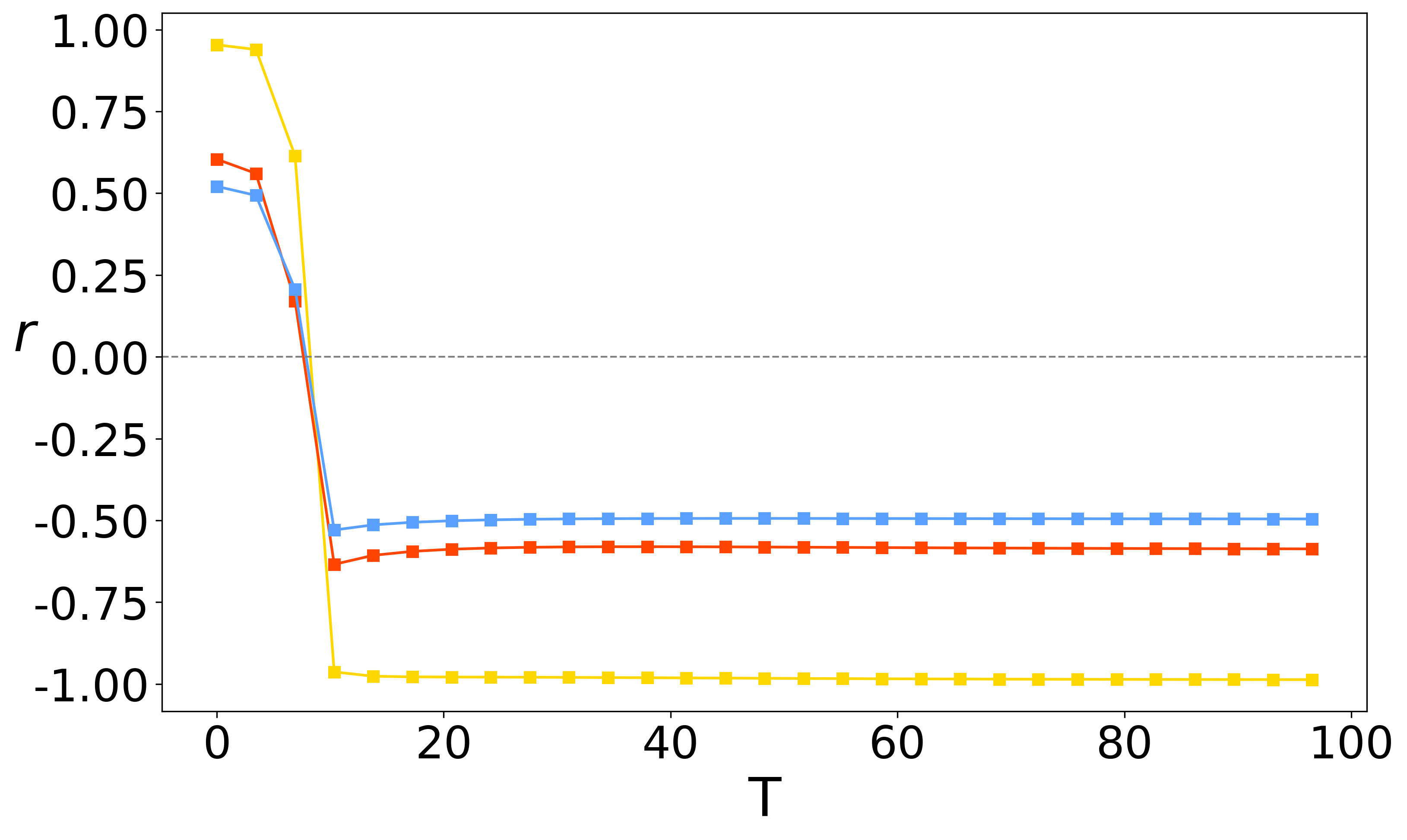}
}

\caption{Temporal correlation between control Weights and network static centralities.}
\label{fig:corr}
\end{figure*}

Fig.~\ref{fig:corr} shows the temporal evolution of the Pearson correlation coefficient $r$ between the control weight $w_i(t)$ and three centrality measures (DC, BC, and CC) across different networks. The results highlight that during the process of rumor dissemination, the correlation between the weights obtained by nodes and the centrality of nodes has undergone a significant transformation.  

In the early stage of propagation, all networks exhibit strong positive correlations, indicating that control resources are mainly allocated to high-centrality nodes~\cite{qiu2021identifying,shi2023cost}. This strategy rapidly suppresses the influence of hubs, bridges, and highly accessible nodes, preventing them from dominating the initial spread. During the middle stage, correlations drop sharply. The correlation with DC and CC becomes negative, while BC decreases more slowly and stabilizes at a moderate negative value. This suggests a shift toward protecting low-centrality nodes, which otherwise could serve as secondary sources of rumor transmission. In the late stage of propagation, the correlation patterns stabilize. DC and CC remain strongly negative, while BC stays weakly to moderately negative. This indicates that resources are persistently allocated to low-centrality nodes, while bridge nodes still retain partial weight due to their residual role in connecting communities. Meanwhile, across different network topologies, the same trend is observed.

\subsection{\label{sec:visualization}Visualization of Weight Allocation by Centrality}
To examine the weighting scheme during the propagation process, we visualized the temporal evolution of node weights based on normalized degree centrality.

Therefore, we conducted a second experiment, which focused on visualizing how the intervention weights are distributed among nodes of different structural roles during rumor propagation. For clarity, we use degree centrality as a representative metric and plot the temporal evolution of weights for nodes with different degree classes. The visualization reveals a stage-specific pattern.

\begin{figure*}[htbp]
\centering

\subfloat[DP\label{fig:dolphins}]{
  \includegraphics[width=0.31\textwidth]{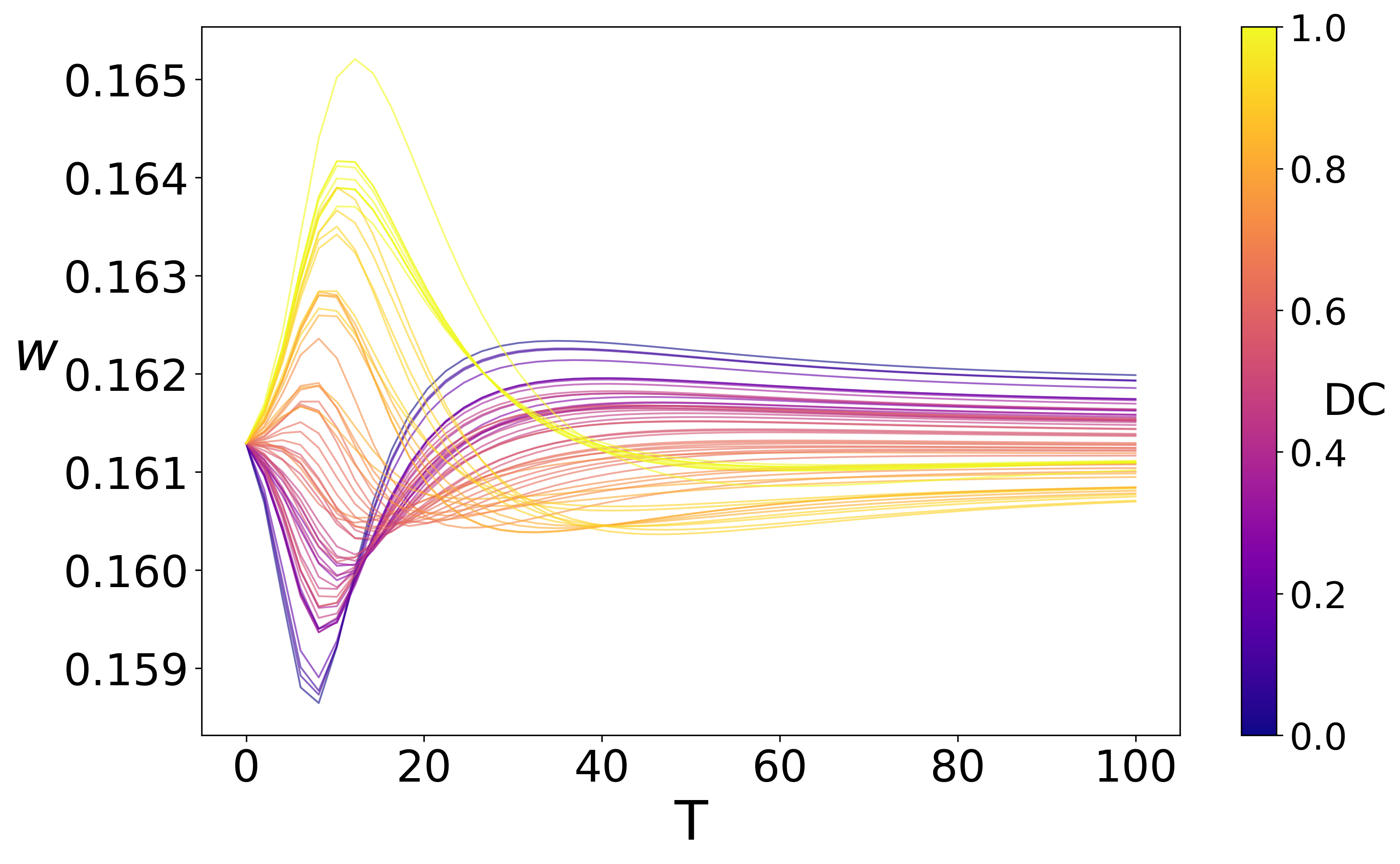}
}\hfill
\subfloat[KT\label{fig:karate}]{
  \includegraphics[width=0.31\textwidth]{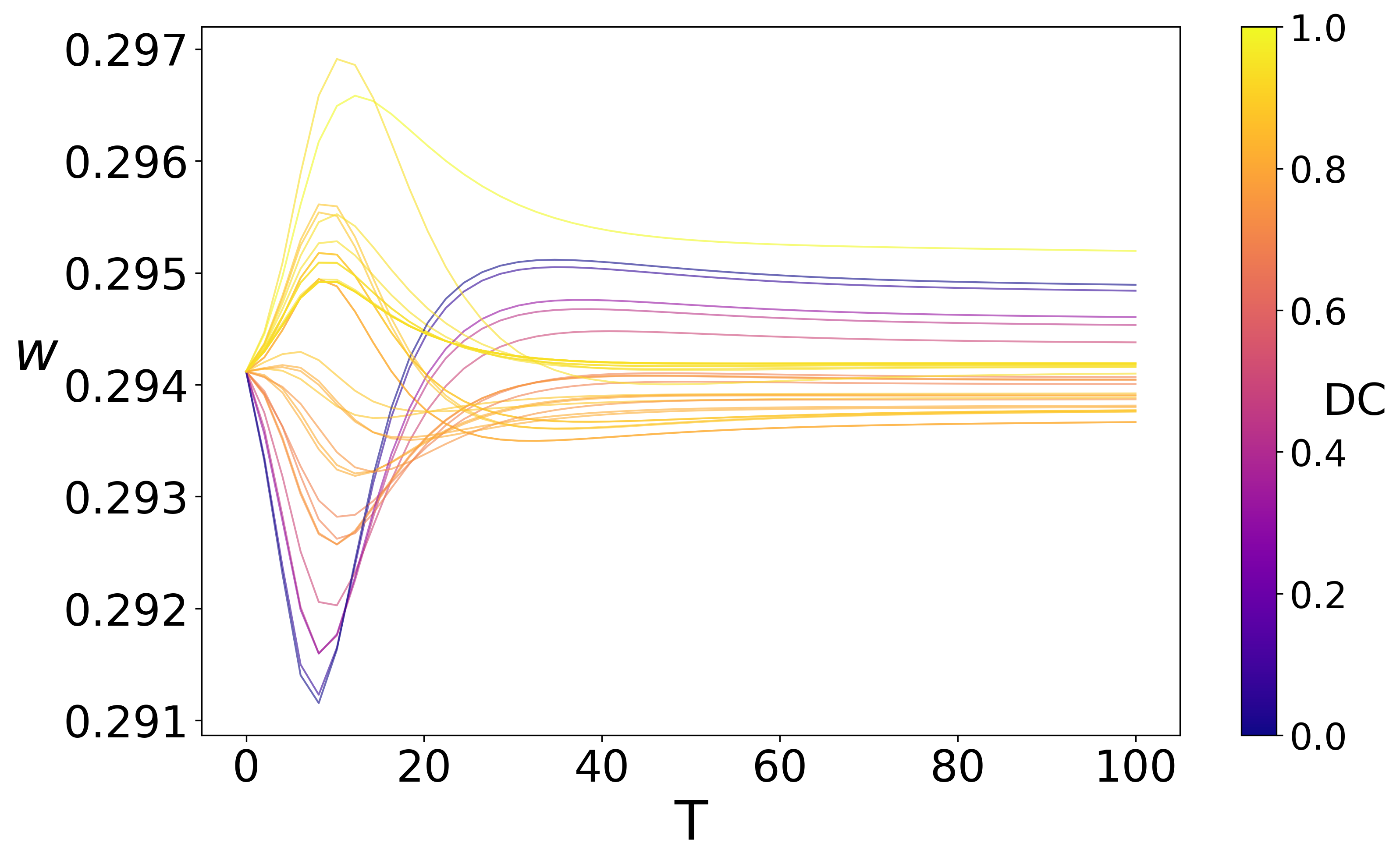}
}\hfill
\subfloat[EM\label{fig:email}]{
  \includegraphics[width=0.31\textwidth]{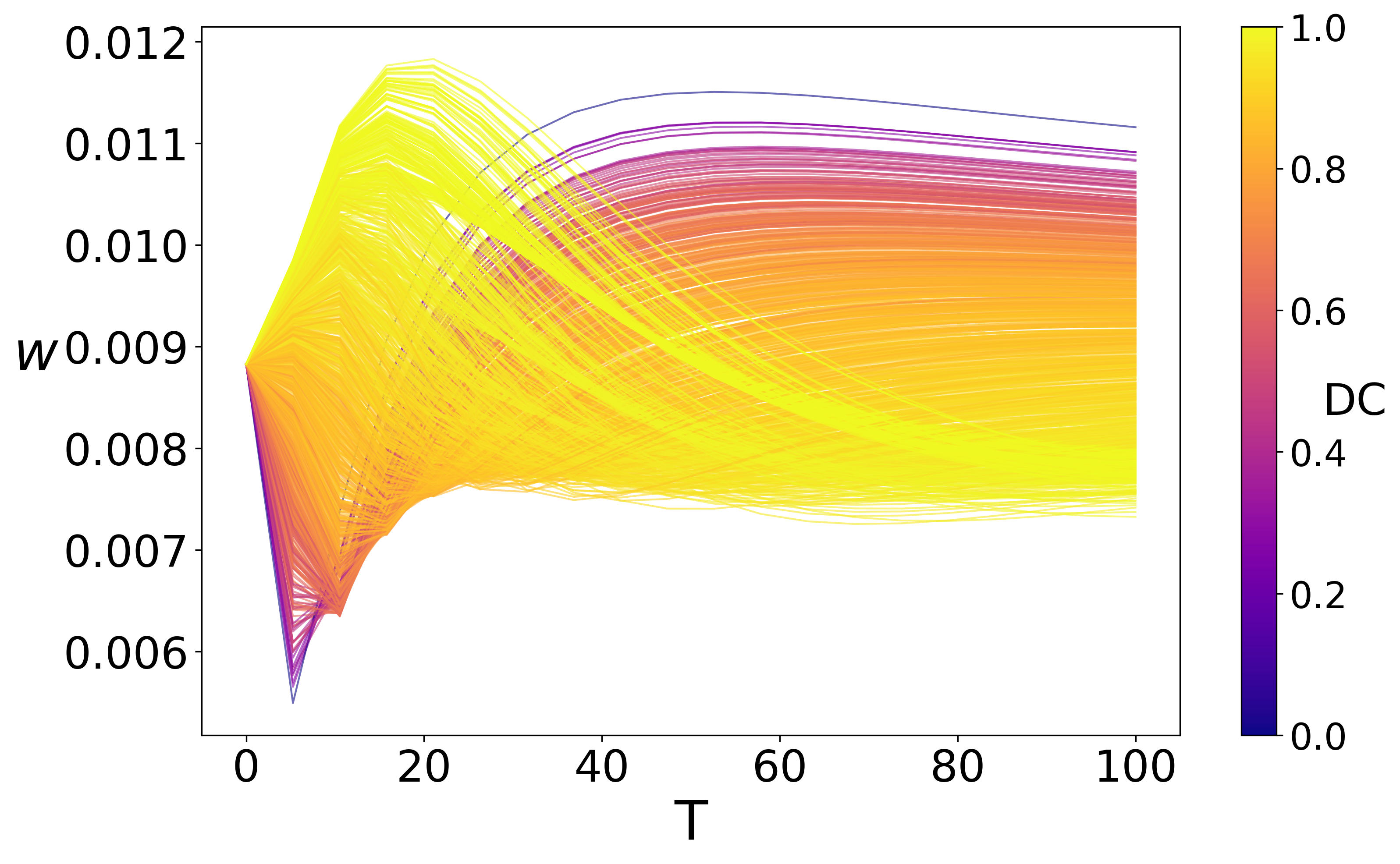}
}

\par\medskip 

\subfloat[SC\label{fig:sociopatterns}]{
  \includegraphics[width=0.31\textwidth]{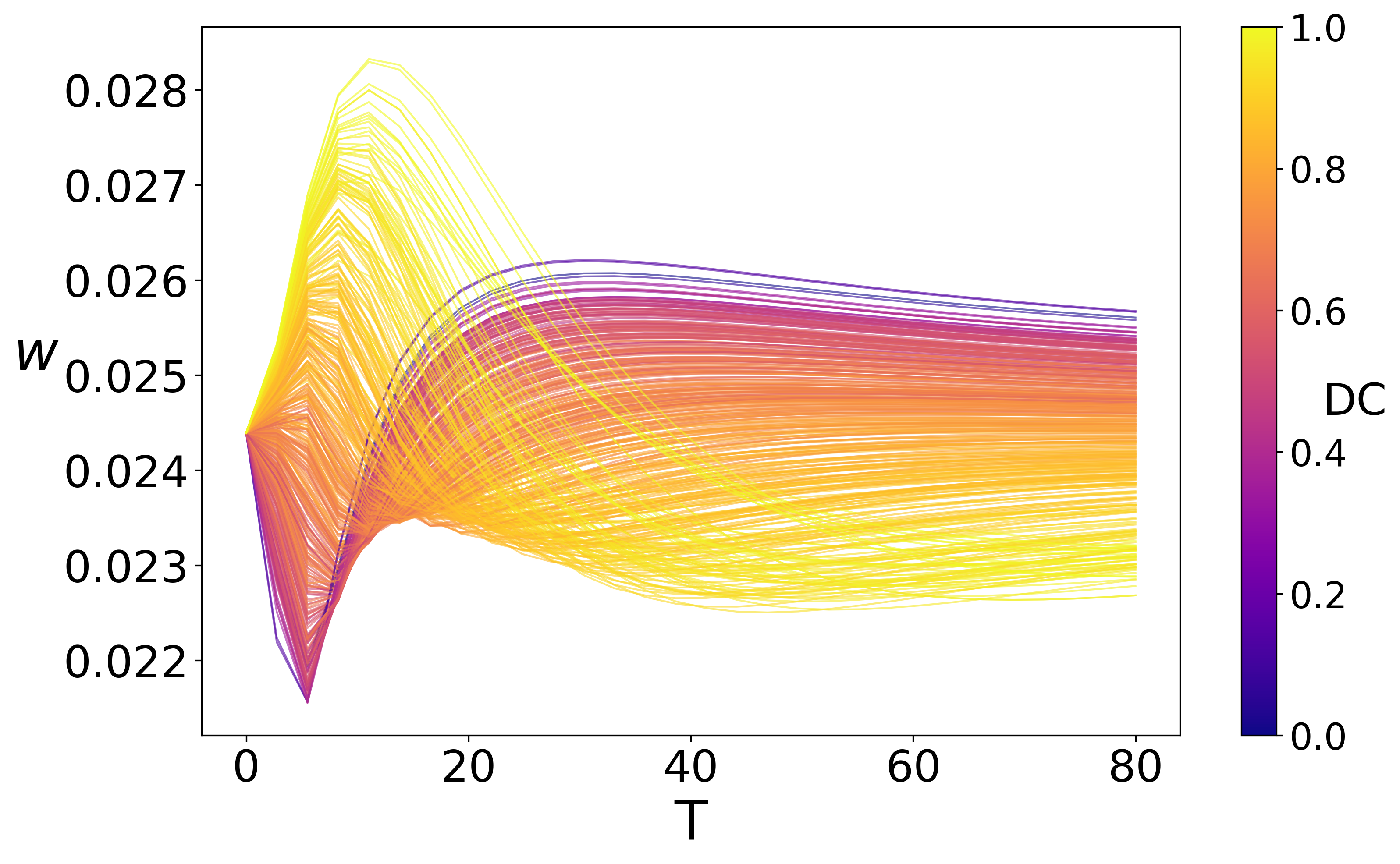}
}\hfill
\subfloat[BA\label{fig:ba}]{
  \includegraphics[width=0.31\textwidth]{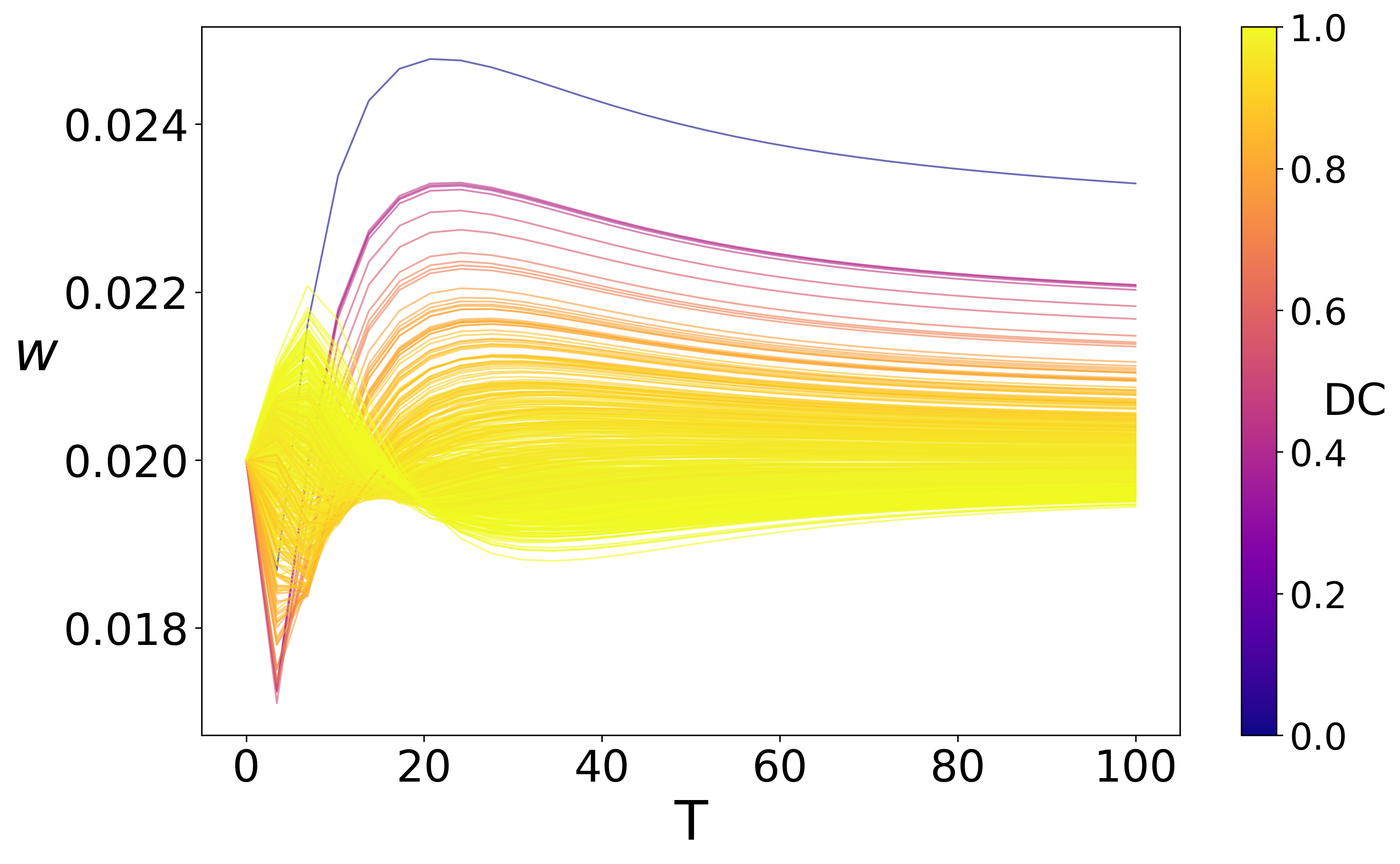}
}\hfill
\subfloat[WS\label{fig:ws}]{
  \includegraphics[width=0.31\textwidth]{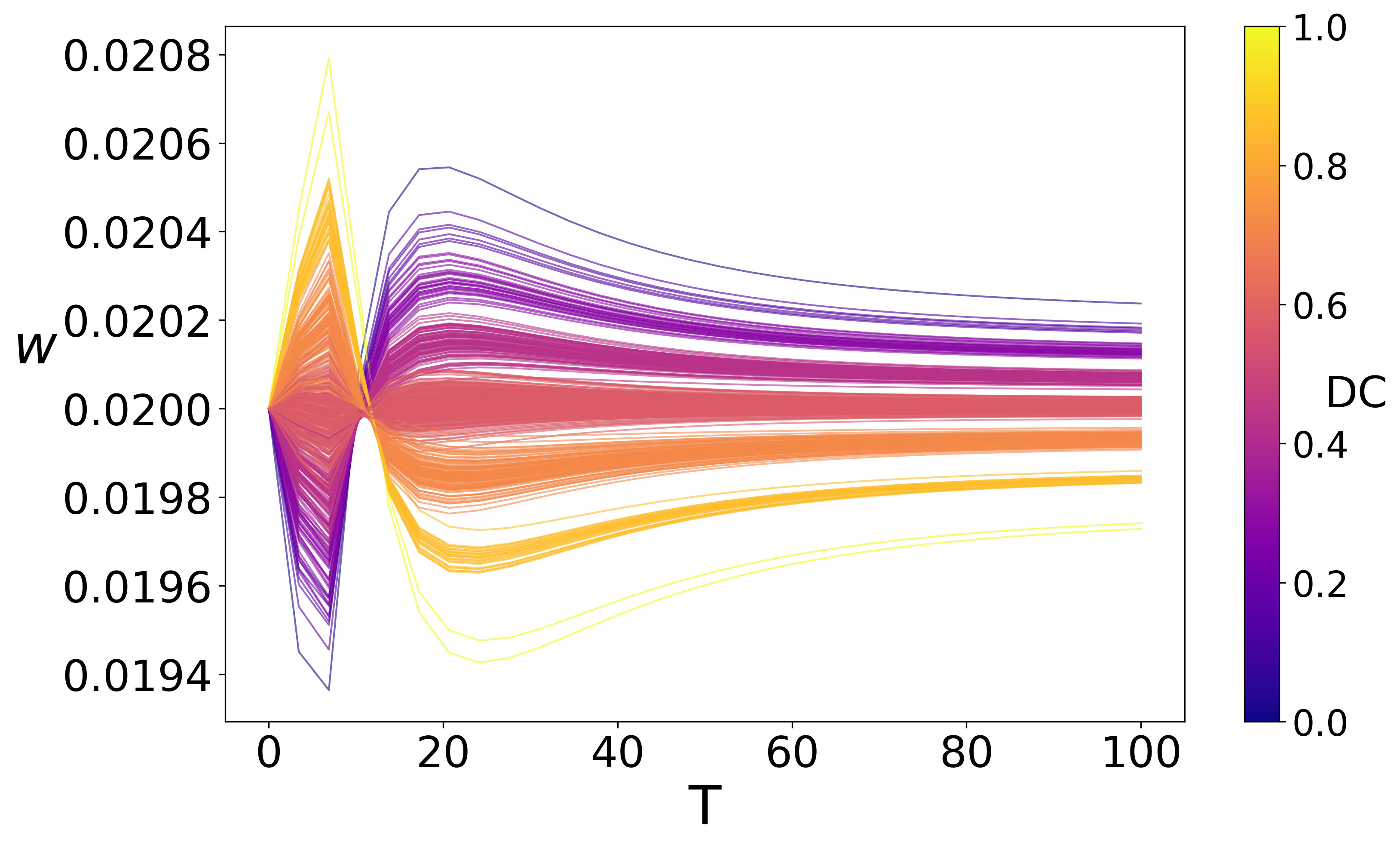}
}

\caption{The weight distribution of nodes with different degrees in different networks varies over time.}
\label{fig:grid}
\end{figure*}

Figure~\ref{fig:grid} shows how control weights $w$ are distributed among nodes of different degrees during rumor propagation across various networks. The control weights exhibit distinct trends depending on the node's structural role within the network.

In the early stage, high-degree nodes receive relatively higher control weights. This indicates that the strategy first prioritizes suppressing central nodes, preventing them from becoming persistent local sources of rumor spreading. As time progresses into the intermediate stage, the weight allocation becomes more balanced across degrees. This reflects a transition where both low- and high-degree nodes are controlled simultaneously, limiting both local outbreaks and the risk of rapid diffusion through hubs. In the late stage, low-degree nodes dominate the allocation. Concentrating resources on  peripheral nodes ensures that the main diffusion channels remain suppressed, preventing the rumor from resurging. Across all network types, this stage-specific pattern is consistent.

Overall, the results indicates that the control weights obtained through the optimal control theory is related to the centrality of the node and is related to the structure and position of the node in the network.

\subsection{\label{Global Infection}Global Infection I Density Curve Variation}

The third experiment evaluates the effectiveness of the proposed strategy at the global level by simulating rumor propagation under the controlled SIR dynamics. We simulate the controlled-SIR model using Algorithm~\ref{alg:generic-fbs}. Two important aspects need to be considered during the simulation:
\begin{enumerate}
    \item The first objective is to assess whether the application of the control weight \(w_i\) at each node results in a more effective reduction of the infection density \(I(t)\) when compared to the uncontrolled scenario, where no control is applied.
    
    \item We will compare the performance of different control strategies in minimizing the cumulative total cost while effectively reducing the infection density.
\end{enumerate}

We compare the infection density curves produced by the optimal dynamic strategy with those under several static allocation baselines, including uniform allocation and centrality-based allocations, positively or negatively correlated with DC, BC, and CC, as well as cycle number(CN)~\cite{shi2023cost} and cycle ratio(CR)~\cite{fan2021characterizing}. 


Among them, the symbol convention clarifies the design of the baseline:
\begin{itemize}
  \item \textbf{DC$+$} and \textbf{DC$-$} denote weight allocations positively and negatively correlated with degree centrality, respectively.
  \item \textbf{BC$+$} and \textbf{BC$-$} denote weight allocations positively and negatively correlated with betweenness centrality, respectively.
  \item \textbf{CC$+$} and \textbf{CC$-$} denote weight allocations positively and negatively correlated with closeness centrality, respectively.
  \item \textbf{CN$+$} and \textbf{CN$-$} denote weight allocations positively and negatively correlated with cycle number, respectively.
  \item \textbf{CR$+$} and \textbf{CR$-$} denote weight allocations positively and negatively correlated with cycle ratio, respectively.
  \item \textbf{UN} denotes that each node is evenly allocated weights.

  \item \textbf{DRA} (Priority Planning) is a dynamic baseline that uses a priority order $\pi$ (smaller index = higher priority) learned by approximately minimizing the order’s maxcut~\cite{scaman2016suppressing}. At each segment, the entire intervention budget is assigned first to the currently infected nodes with the highest priority in $\pi$, resources flow to priority nodes before any lower-priority ones. In our SIR instantiation, treatment raises recovery while keeping infection fixed, $\gamma_i(t)=\gamma_0+u$ and $\sum_i w_i(t)=1$.

  \item \textbf{Unc} represents the baseline scenario where no control measures are applied, allowing the epidemic to propagate freely across the network.
\end{itemize}

Figure~\ref{fig:sir} compares the infection density trajectories under different control strategies across multiple real-world and synthetic networks. The results consistently demonstrate that the optimal strategy, where control weights are dynamically and adaptively updated at each time step, achieves superior epidemic suppression compared to static baselines. In all cases, the optimal curve results in a lower infection density, which is consistent with our goal of minimizing the total area constraint I, and at the same time, the peak of infection is also at a lower level. We speculate that due to the lower heterogeneity in WS and ER networks, the performance of the optimal strategy is less pronounced compared to the BA network, where the heterogeneity of the network structure allows the control strategy to more effectively target key nodes, thereby enhancing its impact on epidemic suppression.

Here, we set the recovery probability as recovery rate $\gamma_0 = 0.1$, and consider multiple values of the infection probability $\gamma$, with the condition that $\gamma > \beta_c$. The value of $\beta_c$ is the propagation threshold, given by $\beta_c = \frac{\langle k \rangle}{\langle k^2 \rangle - \langle k \rangle}$, where $\langle k \rangle$ and $\langle k^2 \rangle$ are the average degree and the average squared degree, respectively~\cite{pastor2015epidemic}.

\begin{figure}[H]
\centering

\subfloat[BA\label{fig:ba_sir}]{
  \includegraphics[width=0.31\textwidth]{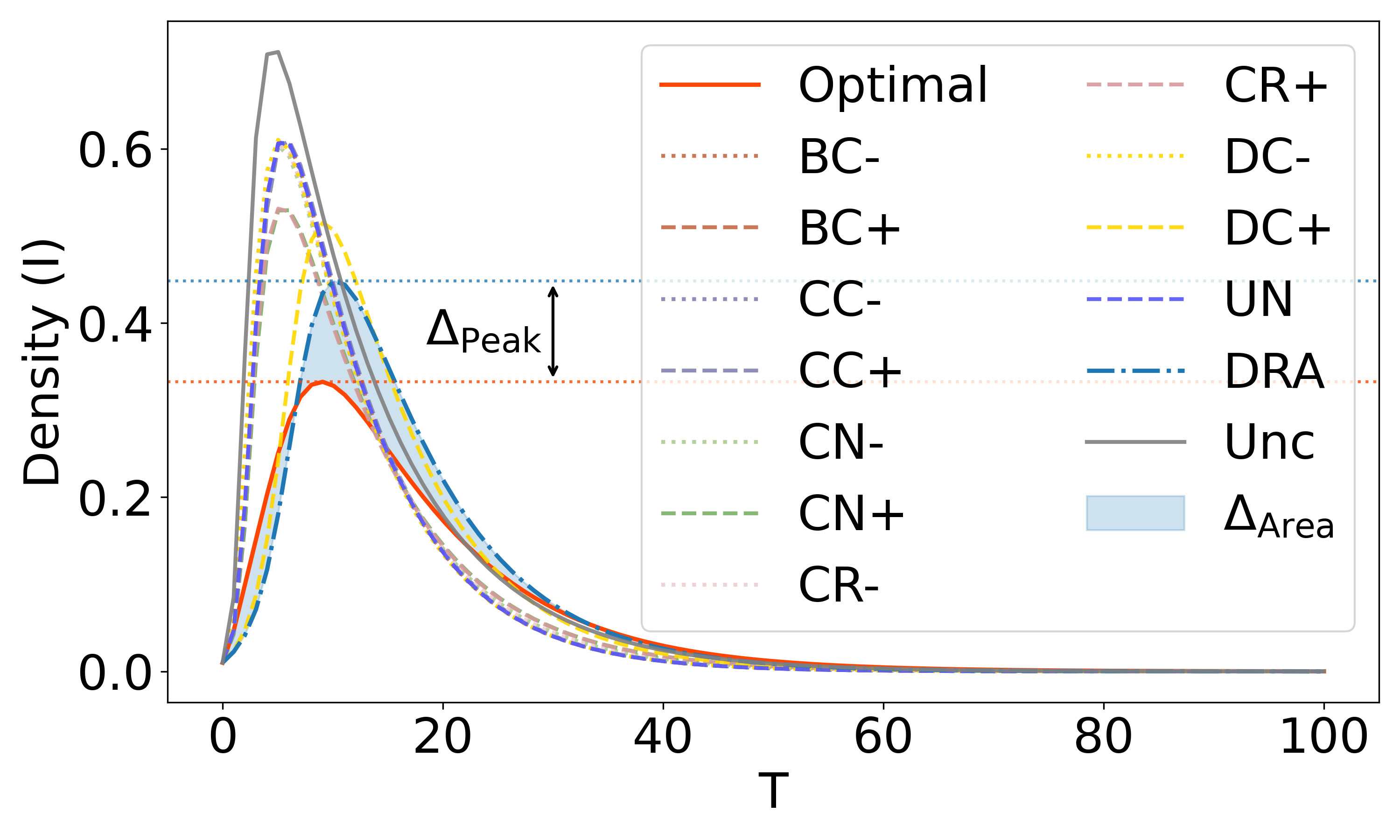}
}\hfill
\subfloat[WS\label{fig:ws_sir}]{
  \includegraphics[width=0.31\textwidth]{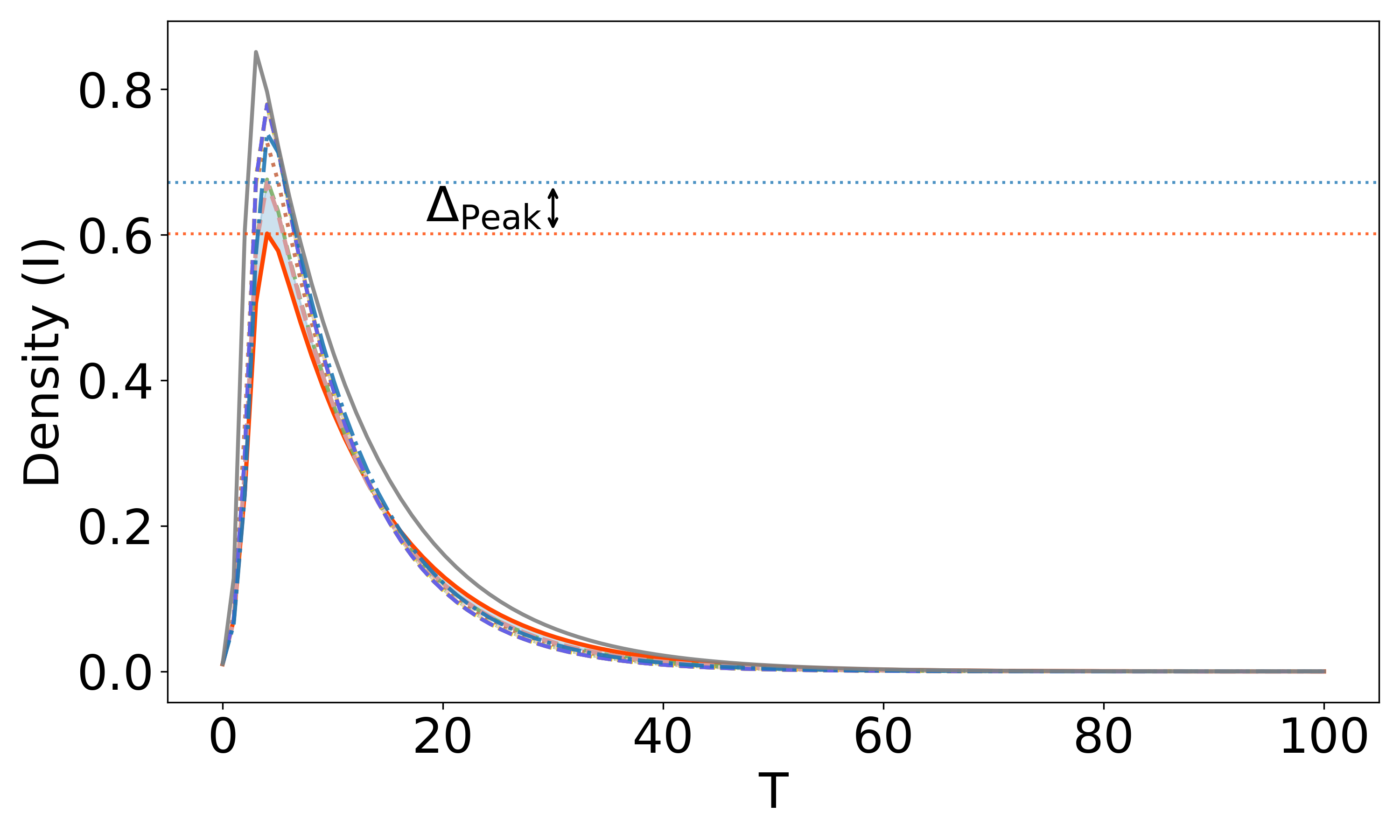}
}\hfill
\subfloat[ER\label{fig:er_sir}]{
  \includegraphics[width=0.31\textwidth]{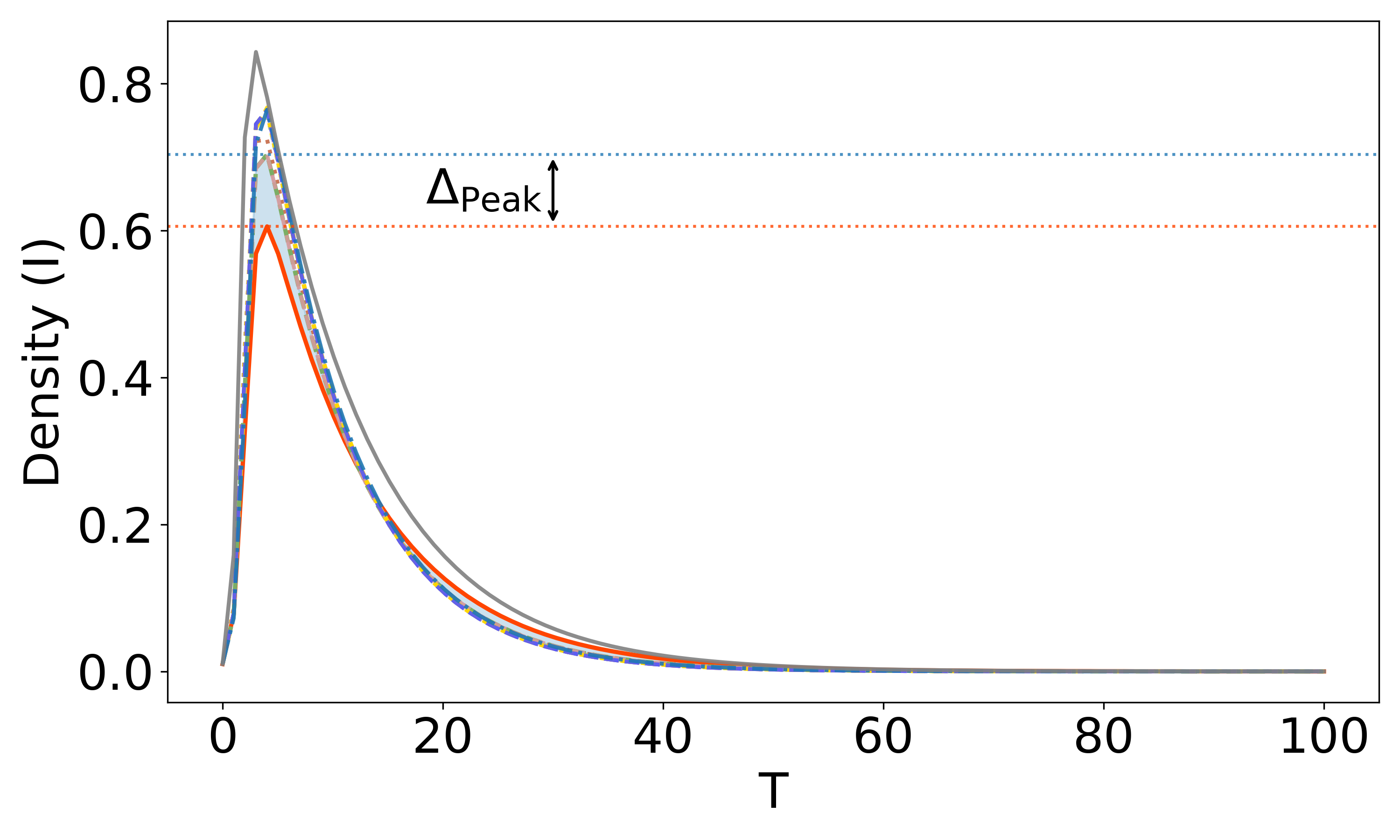}
}

\par\medskip 

\subfloat[FB\label{fig:fb_sir}]{
  \includegraphics[width=0.31\textwidth]{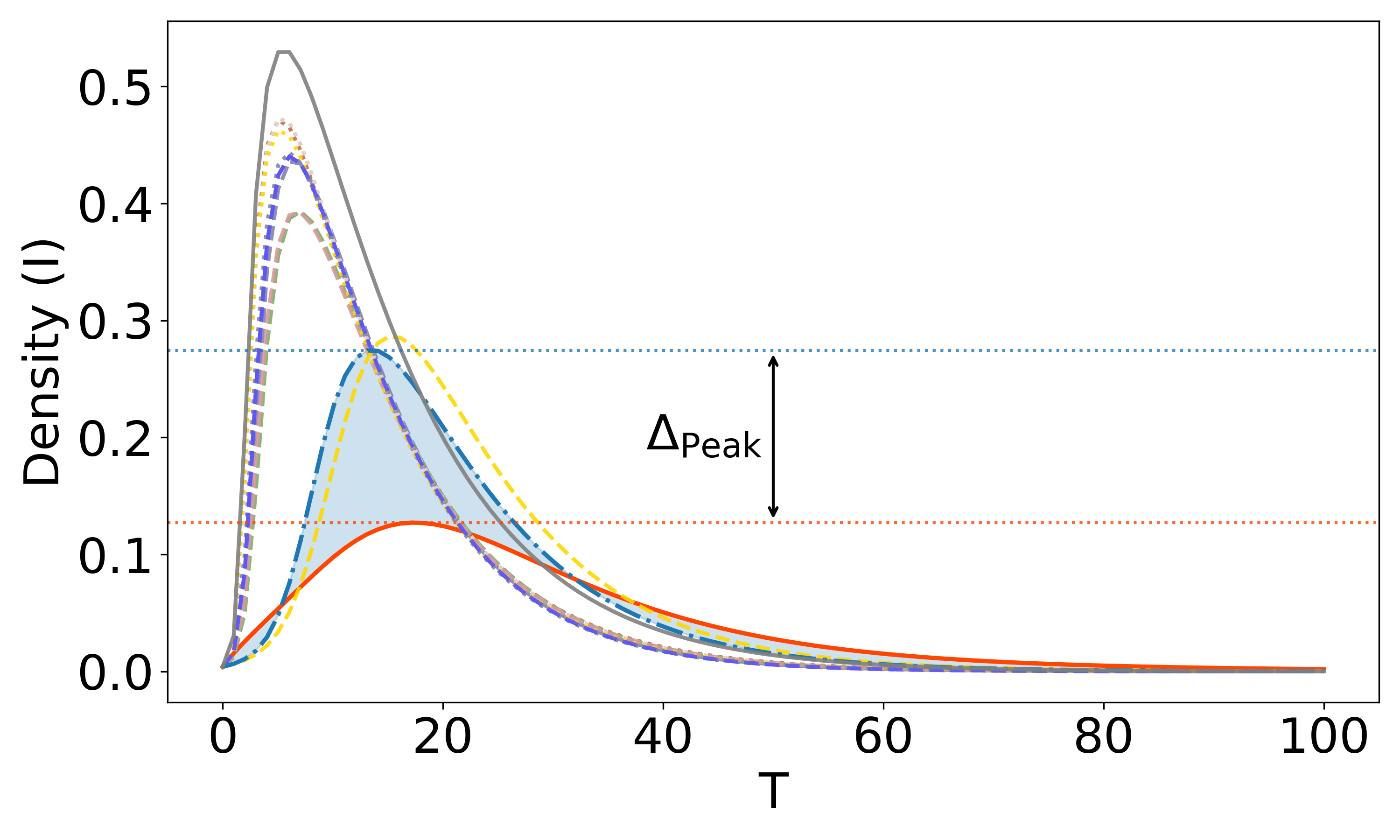}
}\hfill
\subfloat[SC\label{fig:sc_sir}]{
  \includegraphics[width=0.31\textwidth]{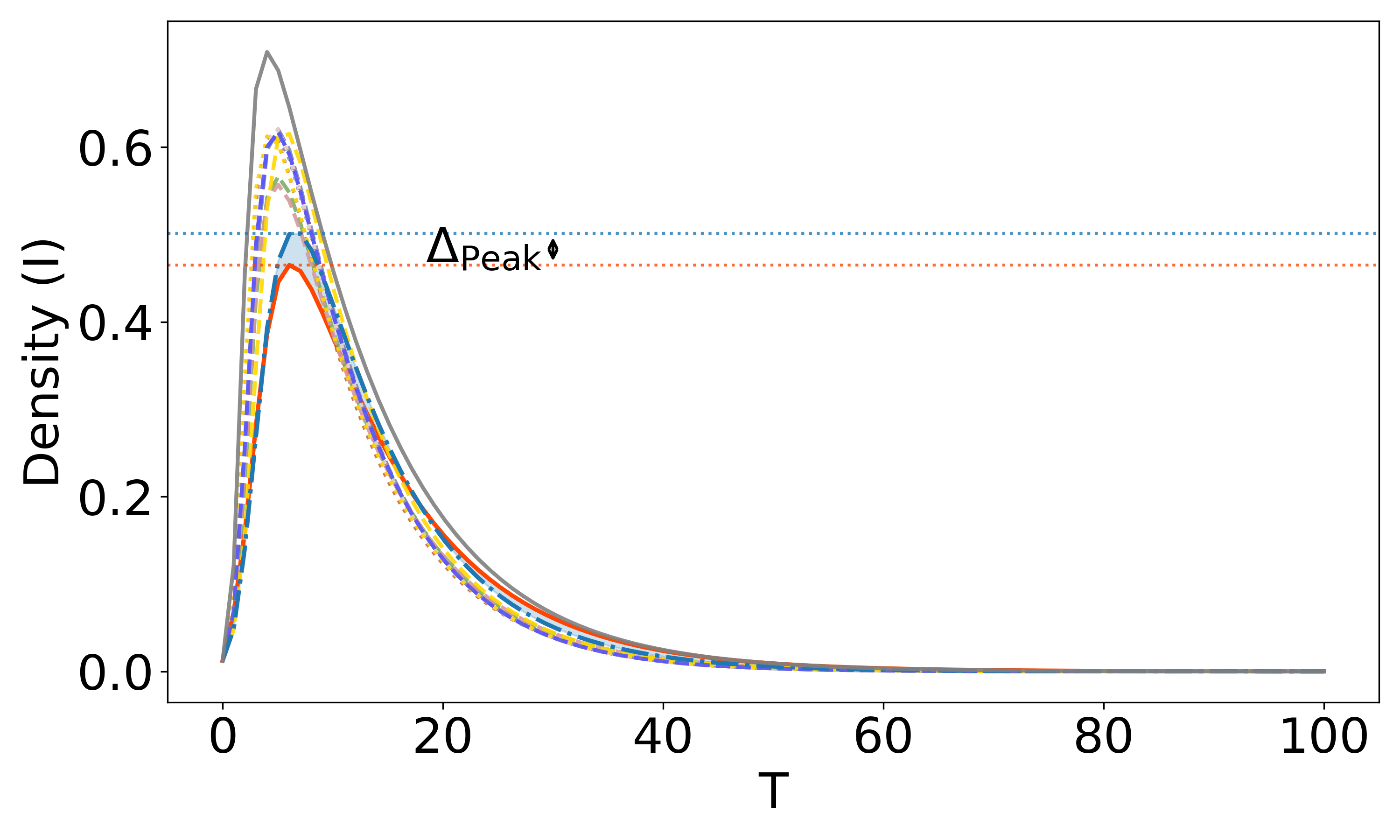}
}\hfill
\subfloat[EM\label{fig:em_sir}]{
  \includegraphics[width=0.31\textwidth]{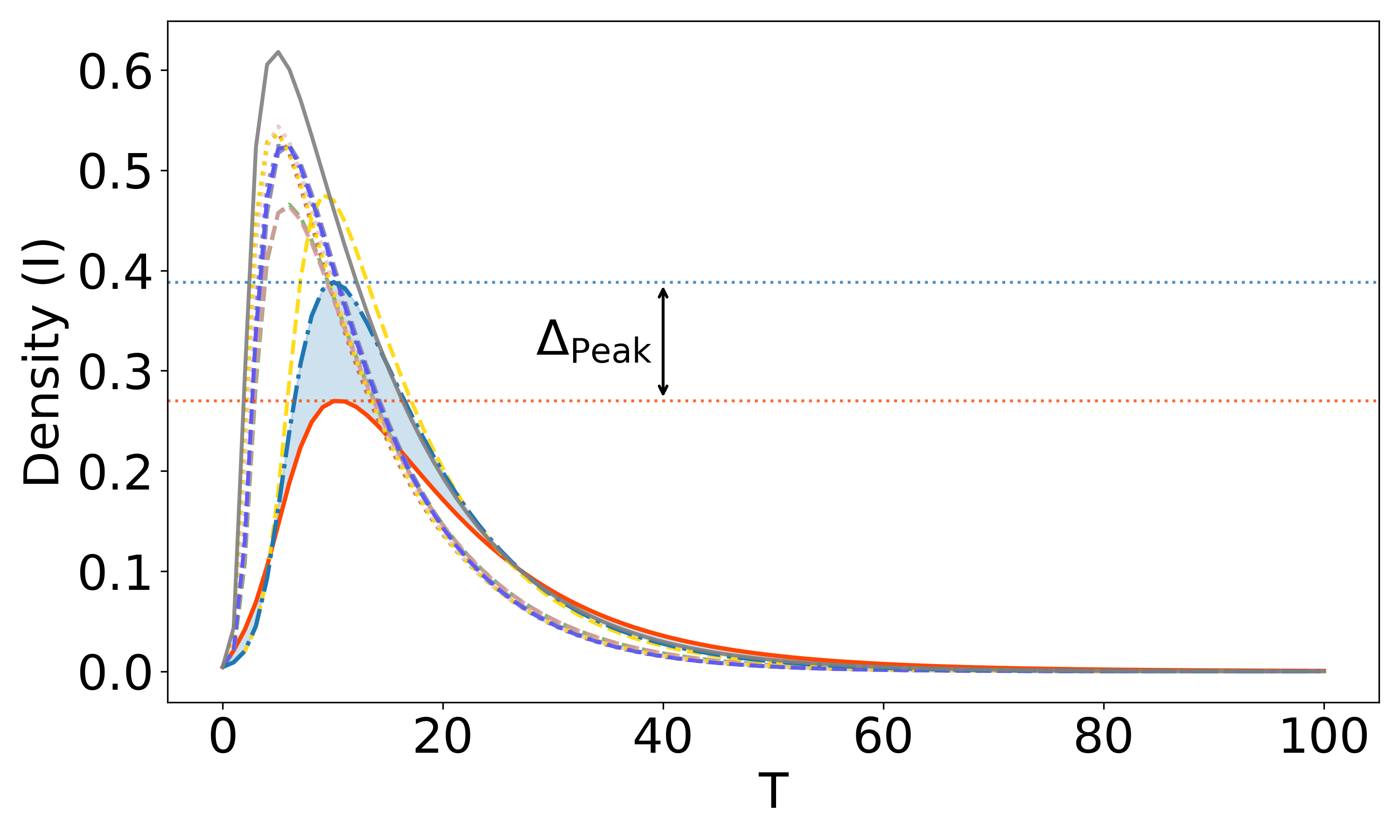}
}

\par\medskip 

\subfloat[DG\label{fig:DG_sir}]{
  \includegraphics[width=0.31\textwidth]{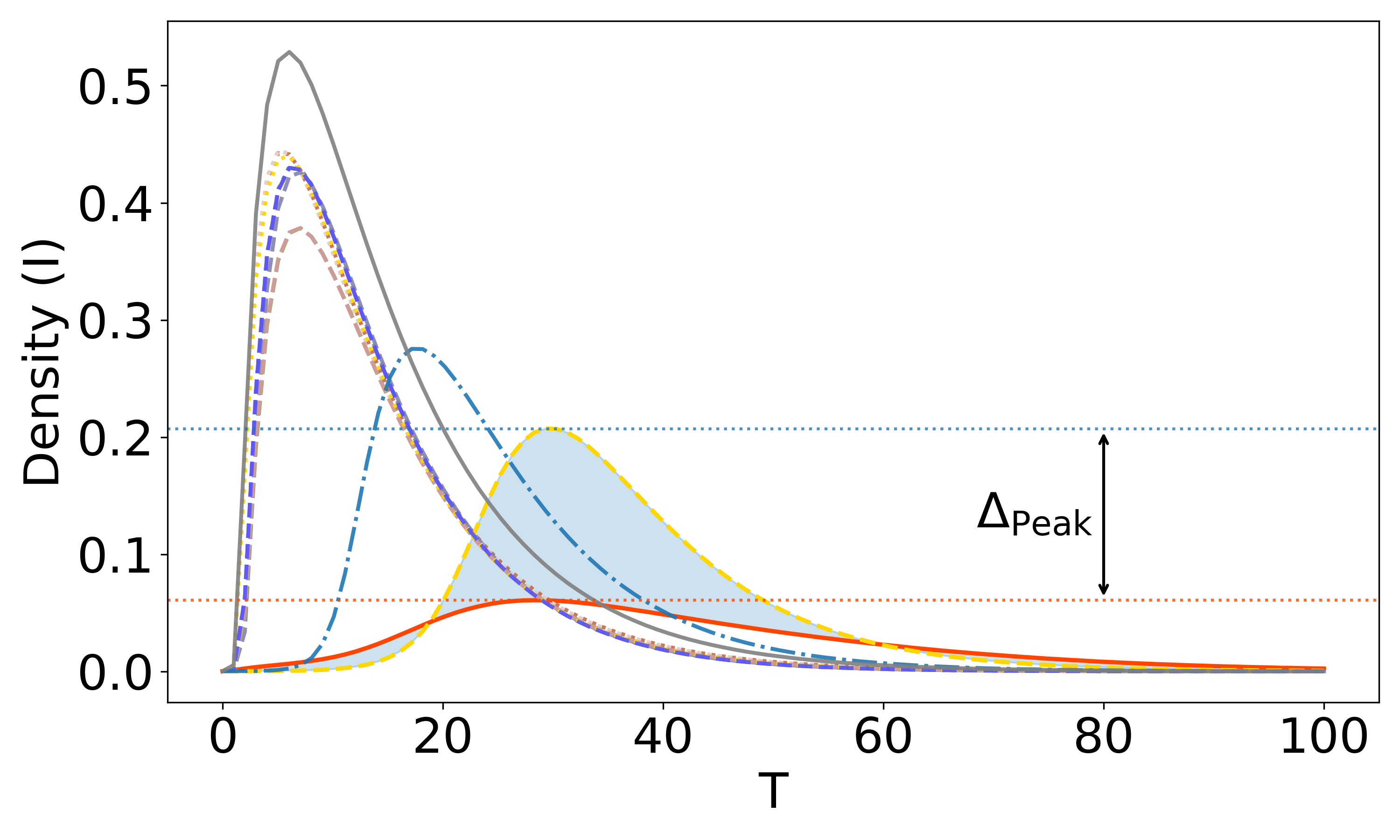}
}\hfill
\subfloat[EN\label{fig:EN_sir}]{
  \includegraphics[width=0.31\textwidth]{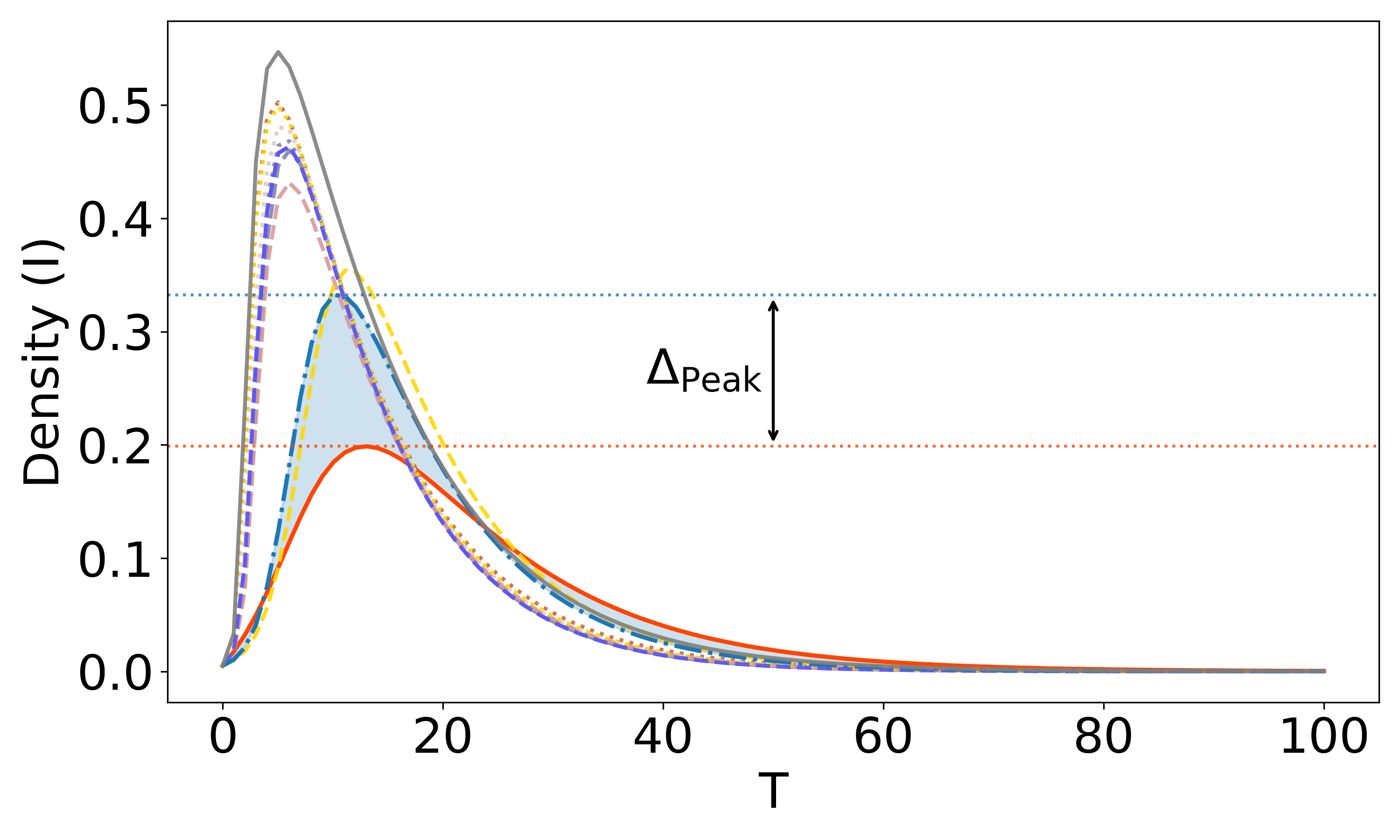}
}\hfill
\subfloat[BL\label{fig:BL_sir}]{
  \includegraphics[width=0.31\textwidth]{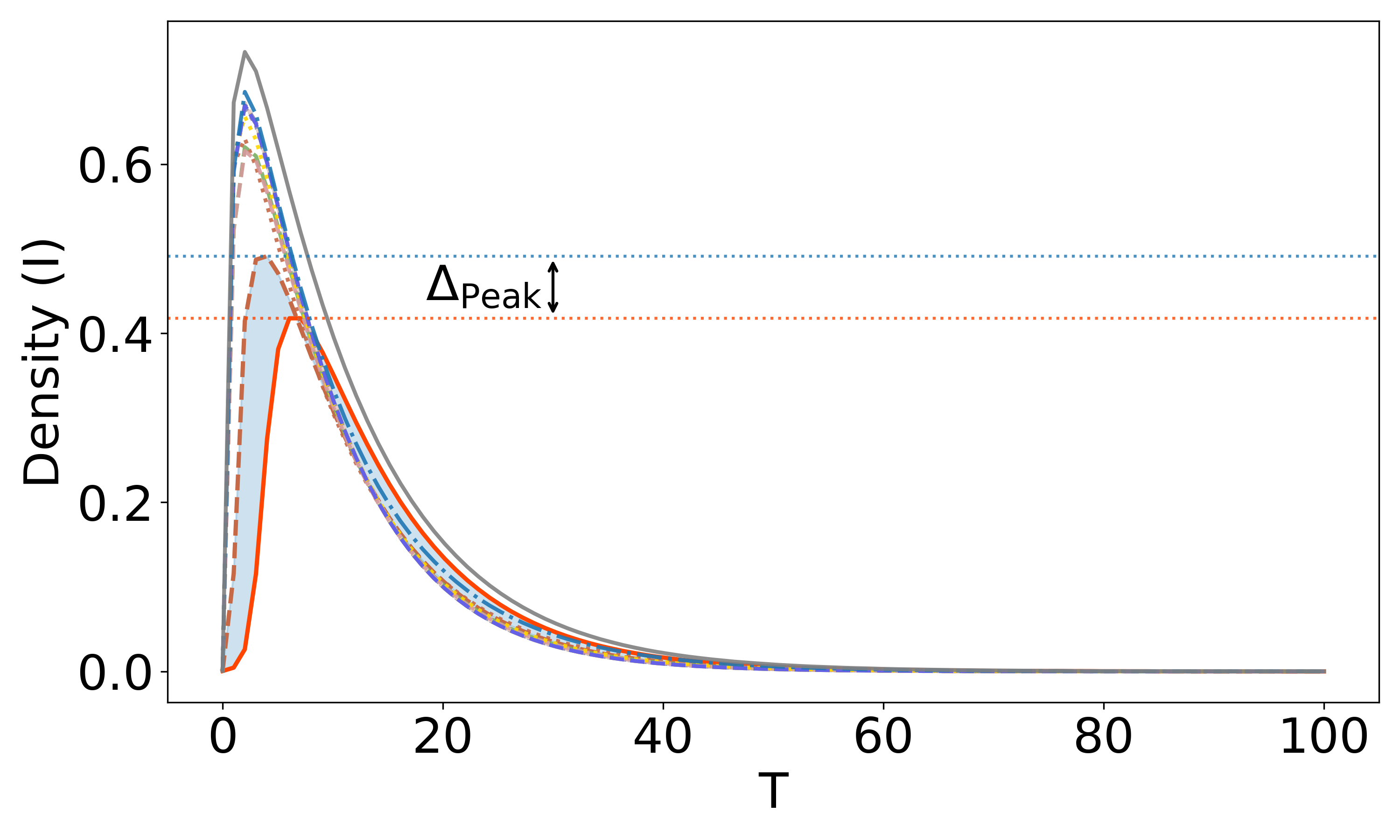}
}

\caption{Different network comparison of infection trajectories: optimal time-varying control vs. centrality-based static policies. These results are captured under SIR model parameters $\gamma_0 = 0.1$, $\beta_0 = 3\beta_c$.}

\label{fig:sir}
\end{figure}

Table~\ref{tab:metrics_all} and further quantify these results using Peak and Area metrics. The \textit{Peak} represents the maximum fraction of infected nodes during the spreading process, while the \textit{Area} denotes the cumulative infection over time, the integral of the infection curve \(I(t)\) with respect to time \(t\):
\begin{equation}
\mathrm{Area} = \int_{0}^{T} I(t)\, \mathrm{d}t ,
\end{equation}
where \(I(t)\) is the total proportion of infected nodes at time \(t\), and \(T\) is the total simulation time. The optimal control strategy achieves the lowest Area values across all networks, confirming that the adaptive allocation of resources dynamically reduces the overall infection burden. Static strategies, while able to slightly reduce the Peak in some cases, fail to minimize the cumulative infection, as reflected in consistently larger Area values. As shown in the Figures~\ref{fig:sir}, we highlight the differences between the optimal and second-best methods in terms of the change values \(\Delta_{\rm{Peak}}\) and \(\Delta_{\rm{Area}}\). Here, \(\Delta\) denotes \(\Delta_{\rm{Peak}}\) or \(\Delta_{\rm{Area}}\) (\(v_{(2)} - v_{(1)}\)), where \(v_{(1)}\) is the minimum value of a given metric among all methods, and \(v_{(2)}\) is the second smallest value. These annotations visually emphasize the relative improvements achieved by the proposed Optimal strategy in suppressing rumor propagation intensity (Peak) and cumulative spread (Area) across diverse network topologies.

\begin{table}[htbp]
\centering
\caption{Peak and Area metrics for different methods on BA, WS, ER, FB, SC, EM, DG, EN, and BL networks. Lower is better.}
\label{tab:metrics_all}

\setlength{\tabcolsep}{2pt}
\renewcommand{\arraystretch}{1.05}

\resizebox{\linewidth}{!}{%
\large
\begin{tabular}{l *{18}{c}}
\toprule
& \multicolumn{2}{c}{BA} & \multicolumn{2}{c}{WS} & \multicolumn{2}{c}{ER} & \multicolumn{2}{c}{FB} & \multicolumn{2}{c}{SC} & \multicolumn{2}{c}{EM} & \multicolumn{2}{c}{DG} & \multicolumn{2}{c}{EN} & \multicolumn{2}{c}{BL} \\
\cmidrule(lr){2-3} \cmidrule(lr){4-5} \cmidrule(lr){6-7} \cmidrule(lr){8-9} \cmidrule(lr){10-11} \cmidrule(lr){12-13} \cmidrule(lr){14-15} \cmidrule(lr){16-17} \cmidrule(lr){18-19}
Method & Peak & Area & Peak & Area & Peak & Area & Peak & Area & Peak & Area & Peak & Area & Peak & Area & Peak & Area & Peak & Area \\
\midrule
Optimal & \textbf{0.4681} & \textbf{7.5270} & \textbf{0.6015} & \textbf{7.5932} & \textbf{0.6056} & \textbf{7.6158} & \textbf{0.1297} & \textbf{4.3059} & \textbf{0.4681} & \textbf{7.5270} & \textbf{0.2719} & \textbf{5.9480} & \textbf{0.0608} & \textbf{2.5415} & \textbf{0.1987} & \textbf{5.0995} & \textbf{0.4177} & \textbf{6.1891} \\
UN     & 0.6172 & 7.8651 & 0.7788 & 7.9983 & 0.7639 & 7.9984 & 0.4403 & 6.8307 & 0.6172 & 7.8651 & 0.5251 & 7.4543 & 0.4298 & 6.9222 & 0.4634 & 6.6457 & 0.6694 & 7.6884 \\
DC$+$    & 0.6210 & 8.0124 & 0.7794 & 8.0014 & 0.7675 & 8.0111 & 0.2895 & 6.1071 & 0.6210 & 8.0124 & 0.4767 & 7.4685 & 0.2072 & 4.9308 & 0.3539 & 6.2844 & 0.6695 & 7.6895 \\
DC$-$  & 0.6136 & 7.8350 & 0.7780 & 8.0014 & 0.7589 & 8.0134 & 0.4626 & 7.1504 & 0.6136 & 7.8350 & 0.5363 & 7.5592 & 0.4397 & 7.1337 & 0.4986 & 7.2252 & 0.6562 & 7.7198 \\
BC$+$     & 0.6173 & 7.8647 & 0.7789 & 7.9981 & 0.7639 & 7.9983 & 0.4402 & 6.8302 & 0.6173 & 7.8647 & 0.5251 & 7.4541 & 0.4298 & 6.9222 & 0.4633 & 6.6447 & 0.4911 & 6.5778 \\
BC$-$     & 0.6137 & 7.7055 & 0.7267 & 8.0047 & 0.7243 & 7.9874 & 0.4715 & 7.3354 & 0.6137 & 7.7055 & 0.5373 & 7.4978 & 0.4423 & 7.2910 & 0.5024 & 7.3405 & 0.6300 & 7.5584 \\
CC$+$     & 0.6201 & 7.8802 & 0.7791 & 7.9991 & 0.7646 & 7.9988 & 0.4366 & 6.8198 & 0.6201 & 7.8802 & 0.5254 & 7.4655 & 0.4259 & 6.8559 & 0.4600 & 6.5901 & 0.6708 & 7.6939 \\
CC$-$     & 0.6136 & 7.8598 & 0.7784 & 7.9990 & 0.7631 & 7.9987 & 0.4437 & 6.8516 & 0.6136 & 7.8598 & 0.5252 & 7.4508 & 0.4308 & 6.9415 & 0.4681 & 6.7508 & 0.6679 & 7.6871 \\
CN$+$     & 0.5662 & 7.5866 & 0.6759 & 7.6998 & 0.7035 & 7.7515 & 0.3933 & 6.4777 & 0.5662 & 7.5866 & 0.4658 & 7.1025 & 0.3784 & 6.4030 & 0.4270 & 6.4092 & 0.6201 & 7.4709 \\
CN$-$     & 0.6203 & 7.8648 & 0.7754 & 8.0823 & 0.7573 & 8.0639 & 0.4729 & 7.3372 & 0.6203 & 7.8648 & 0.5438 & 7.5452 & 0.4442 & 7.2359 & 0.4807 & 6.9617 & 0.6722 & 7.7226 \\
CR$+$     & 0.5569 & 7.5666 & 0.6716 & 7.6727 & 0.7031 & 7.7538 & 0.3930 & 6.4838 & 0.5569 & 7.5666 & 0.4637 & 7.1031 & 0.3785 & 6.4078 & 0.4314 & 6.4009 & 0.6156 & 7.4234 \\
CR$-$   & 0.6203 & 7.8648 & 0.7745 & 8.0969 & 0.7610 & 8.0548 & 0.4729 & 7.3372 & 0.6203 & 7.8648 & 0.5438 & 7.5452 & 0.4442 & 7.2359 & 0.4807 & 6.9617 & 0.6722 & 7.7226 \\
DRA     & 0.5096 & 7.5474 & 0.7395 & 8.1212 & 0.7645 & 8.1112 & 0.2766 & 5.7733 & 0.5096 & 7.5474 & 0.3889 & 6.8441 & 0.2753 & 5.4633 & 0.3322 & 5.9707 & 0.6854 & 8.3110 \\
Unc     & 0.7087 & 9.9374 & 0.8511 & 9.9990 & 0.8430 & 9.9976 & 0.5305 & 9.2408 & 0.7087 & 9.9374 & 0.6180 & 9.7141 & 0.5287 & 9.3441 & 0.5468 & 8.8115 & 0.7326 & 9.7632 \\
\midrule
\(P\) (\%) & 8.87 & 0.27 & 11.65 & 1.05 & 16.10 & 1.78 & 113.26 & 34.08 & 8.87 & 0.27 & 43.03 & 15.07 & 240.79 & 94.01 & 67.19 & 17.08 & 17.57 & 6.28 \\
\bottomrule
\end{tabular}
}
\end{table}

As shown in the last row of Table~\ref{tab:metrics_all}, the suppression efficiency \(P\) (\%) quantifies the relative improvement of the Optimal method over the second-best baseline across nine representative network topologies. The two evaluation metrics, \textit{Peak} and \textit{Area}, both measure the extent of rumor propagation, where smaller values indicate stronger suppression. The efficiency values in all networks are positive, demonstrating that our method consistently achieves lower infection peaks and smaller overall outbreak sizes than competing methods. For instance, in the DG network, the \textit{Peak} suppression efficiency reaches 240.79\%, while the \textit{Area} efficiency is 94.01\%. These results indicate that the proposed control strategy effectively curtails both the intensity and duration of rumor spreading across diverse network structures.

The suppression efficiency \(P\) is computed as
\begin{equation}
P = \frac{\Delta}{v_{(2)}} \times 100\% ,
\end{equation}
where \(\Delta\) denotes \(\Delta_{\rm{Peak}}\) or \(\Delta_{\rm{Area}}\). Therefore, higher \(P\) values correspond to greater relative improvement in rumor suppression performance.

Overall, both curve visualization and quantitative indicators consistently demonstrate that the proposed optimal dynamic control strategy achieves the most significant reduction of infection area across all tested networks. Unlike static allocation methods that either reduce the infection peak without affecting the long-term prevalence, or slightly delay the outbreak while still incurring a large cumulative burden, the optimal strategy simultaneously lowers the peak and accelerates the decline phase, leading to a substantial decrease in total infection exposure over time. This outcome directly validates the effectiveness of our problem formulation, in which the cost function explicitly targets the minimization of cumulative infection.

Furthermore, the consistency between the curve shapes and the quantitative metrics in Table~\ref{tab:metrics_all} highlights the robustness of the proposed approach. Even when static strategies perform competitively in peak suppression for specific networks, they fail to match the optimal method in terms of overall area reduction. By dynamically reallocating resources according to the evolving state of the system, the optimal control adapts to different stages of rumor diffusion, ensuring that resources are used efficiently throughout the process rather than concentrated at a single phase.

These results demonstrate that the proposed node-level adaptive optimal control strategy is both theoretically sound and practically feasible.    By dynamically reallocating resources at each step of rumor diffusion, the method ensures efficient use of limited interventions and consistently achieves the smallest infection area, directly validating our formulation in realistic network settings. The incorporation of a quadratic control cost automatically balances prevalence reduction and resource expenditure, penalizing excessive allocation while maintaining strong suppression.    Beyond rumor containment, this principle of adaptive, time-varying resource allocation can be extended to other domains such as epidemic control, information dissemination, cybersecurity, and infrastructure protection, where achieving an effective balance between suppression effectiveness and resource efficiency is equally critical.

\section{\label{Conclusion}Conclusion} 

This work develops a structure-aware, resource-constrained optimal control framework for rumor containment on networks. The controller operates at the node level with time-varying intervention weights derived from a principled optimization formulation. We establish the existence of an optimal solution and implement an efficient forward--backward sweep procedure to compute policies. 

Extensive simulations on synthetic and real networks show that the proposed approach consistently reduces both peak prevalence and cumulative infection area relative to uniform and centrality-based static baselines. A robust stage-aware allocation law emerges across settings: early suppression of influential hubs followed by targeted cleanup on peripheral nodes. This law provides an interpretable operational rule that balances effectiveness with resource usage. 

The framework links policy design to network topology, yielding a scalable and transparent tool for managing information diffusion. Beyond rumor mitigation, the same principles apply to misinformation management, public-health communication, and crisis response on large-scale platforms. 

This study has limitations. The analysis assumes full-state observability and fixed budgets without delays; model mismatch and partial observations are only indirectly addressed. Future work will strengthen theoretical guarantees (convergence and approximation bounds), incorporate partial observability and budget uncertainty, and extend the framework to multilayer dynamics, time-delay effects, and fairness-aware constraints.


%



\section*{Acknowledgment}
This study was supported by the National Natural Science Foundation of China (Grant Nos. T2293771, 62503447), the STI 2030 Major Projects (Grant No. 2022ZD0211400), the China Postdoctoral Science Foundation (Grant No. 2024M763131), the Postdoctoral Fellowship Program of CPSF (Grant No. GZC20241653), and the New Cornerstone Science Foundation through the XPLORER PRIZE.

\section*{Credit authorship contribution statement}
Yan Zhu: Conceptualization, Methodology, Software, Writing -- original draft. Qingyang Liu: Conceptualization, Methodology, Software, Validation. Chang Guo: Methodology, Validation, Writing -- review \& editing. Tianlong Fan: Conceptualization, Methodology, Validation, Writing -- review \& editing, Supervision, Funding acquisition. Linyuan L\"u: Conceptualization, Writing -- review \& editing, Supervision, Funding acquisition.

\section*{Declaration of competing interest}
The authors declare no competing interests.

\section*{Data and code availability}
Data and code will be made available on request.






\bibliographystyle{elsarticle-num}             
\bibliography{main}  

\begin{thebibliography}{10}
\expandafter\ifx\csname url\endcsname\relax
  \def\url#1{\texttt{#1}}\fi
\expandafter\ifx\csname urlprefix\endcsname\relax\def\urlprefix{URL }\fi
\expandafter\ifx\csname href\endcsname\relax
  \def\href#1#2{#2} \def\path#1{#1}\fi

\bibitem{daley1964epidemics}
D.~J. Daley, D.~G. Kendall, Epidemics and rumours, Nature 204~(4963) (1964) 1118--1118.

\bibitem{daley1965stochastic}
{Daley, Daryl J}, {Kendall, David G}, Stochastic rumours, IMA Journal of Applied Mathematics 1~(1) (1965) 42--55.

\bibitem{daniel1973mathematical}
P.~Daniel, M.~THOMPSON, Mathematical models and applications: with emphasis on the social, life and management sciences, Prentice-Hall, 1973.

\bibitem{zhao2013sir}
L.~Zhao, H.~Cui, X.~Qiu, X.~Wang, J.~Wang, Sir rumor spreading model in the new media age, Physica A: Statistical Mechanics and its Applications 392~(4) (2013) 995--1003.

\bibitem{nekovee2007theory}
M.~Nekovee, Y.~Moreno, G.~Bianconi, M.~Marsili, Theory of rumour spreading in complex social networks, Physica A: Statistical Mechanics and its Applications 374~(1) (2007) 457--470.

\bibitem{li2019dynamic}
K.~Li, G.~Zhu, Z.~Ma, L.~Chen, Dynamic stability of an siqs epidemic network and its optimal control, Communications in Nonlinear Science and Numerical Simulation 66 (2019) 84--95.

\bibitem{nowzari2016analysis}
C.~Nowzari, V.~M. Preciado, G.~J. Pappas, Analysis and control of epidemics: A survey of spreading processes on complex networks, IEEE Control Systems Magazine 36~(1) (2016) 26--46.

\bibitem{vosoughi2018spread}
S.~Vosoughi, D.~Roy, S.~Aral, The spread of true and false news online, science 359~(6380) (2018) 1146--1151.

\bibitem{wang2019stability}
L.~Wang, X.~Zhao, et~al., Stability analysis and optimal control of a rumor spreading model with media report, Physica A: Statistical Mechanics and Its Applications 517 (2019) 551--562.

\bibitem{cheng2021dynamical}
Y.~Cheng, L.~Zhao, et~al., Dynamical behaviors and control measures of rumor-spreading model in consideration of the infected media and time delay, Information Sciences 564 (2021) 237--253.

\bibitem{zhu2020stability}
L.~Zhu, B.~Wang, Stability analysis of a sair rumor spreading model with control strategies in online social networks, Information Sciences 526 (2020) 1--19.

\bibitem{zhong2022dynamics}
X.~Zhong, Y.~Yang, R.~Miao, Y.~Peng, G.~Liu, Dynamics and intermittent stochastic stabilization of a rumor spreading model with guidance mechanism in heterogeneous network, Chinese Physics B 31~(4) (2022) 040205.

\bibitem{zhu2017dynamical}
L.~Zhu, H.~Zhao, Dynamical behaviours and control measures of rumour-spreading model with consideration of network topology, International Journal of Systems Science 48~(10) (2017) 2064--2078.

\bibitem{moreno2004dynamics}
Y.~Moreno, M.~Nekovee, A.~F. Pacheco, Dynamics of rumor spreading in complex networks, Physical Review E—Statistical, Nonlinear, and Soft Matter Physics 69~(6) (2004) 066130.

\bibitem{jeong2018optimal}
Y.~D. Jeong, K.~S. Kim, I.~H. Jung, Optimal control strategies depending on interest level for the spread of rumor, Discrete Dynamics in Nature and Society 2018~(1) (2018) 9158014.

\bibitem{he2016cost}
Z.~He, Z.~Cai, J.~Yu, X.~Wang, Y.~Sun, Y.~Li, Cost-efficient strategies for restraining rumor spreading in mobile social networks, IEEE Transactions on Vehicular Technology 66~(3) (2016) 2789--2800.

\bibitem{chen2020dynamical}
S.~Chen, H.~Jiang, L.~Li, J.~Li, Dynamical behaviors and optimal control of rumor propagation model with saturation incidence on heterogeneous networks, Chaos, Solitons \& Fractals 140 (2020) 110206.

\bibitem{li2020dynamical}
J.~Li, H.~Jiang, X.~Mei, C.~Hu, G.~Zhang, Dynamical analysis of rumor spreading model in multi-lingual environment and heterogeneous complex networks, Information Sciences 536 (2020) 391--408.

\bibitem{xu2023proactive}
P.~Xu, Z.~Peng, L.~Wang, Proactive rumor control: When impression counts, in: Pacific-Asia Conference on Knowledge Discovery and Data Mining, Springer, 2023, pp. 30--42.

\bibitem{yang2019rumor}
L.~Yang, Z.~Li, A.~Giua, Rumor containment by spreading correct information in social networks, in: 2019 American Control Conference (ACC), IEEE, 2019, pp. 5608--5613.

\bibitem{li2022modeling}
T.~Li, Y.~Guo, Modeling and optimal control of mutated covid-19 (delta strain) with imperfect vaccination, Chaos, Solitons \& Fractals 156 (2022) 111825.

\bibitem{zhang2016optimal}
C.~Zhang, H.~Huang, Optimal control strategy for a novel computer virus propagation model on scale-free networks, Physica A: Statistical Mechanics and Its Applications 451 (2016) 251--265.

\bibitem{zhu2020delay}
L.~Zhu, W.~Liu, Z.~Zhang, Delay differential equations modeling of rumor propagation in both homogeneous and heterogeneous networks with a forced silence function, Applied Mathematics and Computation 370 (2020) 124925.

\bibitem{xu2010global}
R.~Xu, Z.~Ma, Global stability of a delayed seirs epidemic model with saturation incidence rate, Nonlinear Dynamics 61~(1) (2010) 229--239.

\bibitem{hou2009continuous}
J.~Hou, Z.~Teng, Continuous and impulsive vaccination of seir epidemic models with saturation incidence rates, Mathematics and Computers in Simulation 79~(10) (2009) 3038--3054.

\bibitem{zhang2008global}
T.~Zhang, Z.~Teng, Global asymptotic stability of a delayed seirs epidemic model with saturation incidence, Chaos, Solitons \& Fractals 37~(5) (2008) 1456--1468.

\bibitem{li2021nonlinear}
T.~Li, Y.~Guo, Nonlinear dynamical analysis and optimal control strategies for a new rumor spreading model with comprehensive interventions, Qualitative theory of dynamical systems 20~(3) (2021) 84.

\bibitem{yue2022analysis}
X.~Yue, L.~Huo, Analysis of the stability and optimal control strategy for an iscr rumor propagation model with saturated incidence and time delay on a scale-free network, Mathematics 10~(20) (2022) 3900.

\bibitem{pastor2015epidemic}
R.~Pastor-Satorras, C.~Castellano, P.~Van~Mieghem, A.~Vespignani, Epidemic processes in complex networks, Reviews of modern physics 87~(3) (2015) 925--979.

\bibitem{kermack1927contribution}
W.~O. Kermack, A.~G. McKendrick, A contribution to the mathematical theory of epidemics, Proceedings of the royal society of london. Series A, Containing papers of a mathematical and physical character 115~(772) (1927) 700--721.

\bibitem{hethcote2000mathematics}
H.~W. Hethcote, The mathematics of infectious diseases, SIAM review 42~(4) (2000) 599--653.

\bibitem{anderson1991infectious}
R.~M. Anderson, R.~M. May, Infectious diseases of humans: dynamics and control, Oxford university press, 1991.

\bibitem{lajmanovich1976deterministic}
A.~Lajmanovich, J.~A. Yorke, A deterministic model for gonorrhea in a nonhomogeneous population, Mathematical Biosciences 28~(3-4) (1976) 221--236.

\bibitem{diekmann1990definition}
O.~Diekmann, J.~A.~P. Heesterbeek, J.~A.~J. Metz, On the definition and the computation of the basic reproduction ratio r 0 in models for infectious diseases in heterogeneous populations, Journal of mathematical biology 28~(4) (1990) 365--382.

\bibitem{horn2012matrix}
R.~A. Horn, C.~R. Johnson, Matrix analysis, Cambridge university press, 2012.

\bibitem{wang2003epidemic}
Y.~Wang, D.~Chakrabarti, C.~Wang, C.~Faloutsos, Epidemic spreading in real networks: An eigenvalue viewpoint, in: 22nd International Symposium on Reliable Distributed Systems, 2003. Proceedings., IEEE, 2003, pp. 25--34.

\bibitem{van2008virus}
P.~Van~Mieghem, J.~Omic, R.~Kooij, Virus spread in networks, IEEE/ACM Transactions On Networking 17~(1) (2008) 1--14.

\bibitem{kirk2004optimal}
D.~E. Kirk, Optimal control theory: an introduction, Courier Corporation, 2004.

\bibitem{coddington1955}
E.~A. Coddington, N.~Levinson, Theory of Ordinary Differential Equations, McGraw--Hill, 1955.

\bibitem{perko2001}
L.~Perko, Differential Equations and Dynamical Systems, 3rd Edition, Springer, 2001.

\bibitem{van2002reproduction}
P.~Van~den Driessche, J.~Watmough, Reproduction numbers and sub-threshold endemic equilibria for compartmental models of disease transmission, Mathematical biosciences 180~(1-2) (2002) 29--48.

\bibitem{shuai2013global}
Z.~Shuai, P.~van~den Driessche, Global stability of infectious disease models using lyapunov functions, SIAM Journal on Applied Mathematics 73~(4) (2013) 1513--1532.

\bibitem{gronwall1919note}
T.~H. Gronwall, Note on the derivatives with respect to a parameter of the solutions of a system of differential equations, Annals of Mathematics 20~(4) (1919) 292--296.

\bibitem{khanafer2016stability}
A.~Khanafer, T.~Ba{\c{s}}ar, B.~Gharesifard, Stability of epidemic models over directed graphs: A positive systems approach, Automatica 74 (2016) 126--134.

\bibitem{lenhart2007optimal}
S.~Lenhart, J.~T. Workman, Optimal control applied to biological models, Chapman and Hall/CRC, 2007.

\bibitem{lagrange1813theorie}
J.~L. Lagrange, Th{\'e}orie des fonctions analytiques: contenant les principes du calcul diff{\'e}rentiel, d{\'e}gag{\'e}s de toute consid{\'e}ration d'infiniment petits, d'{\'e}vanouissans, de limites et de fluxions, et r{\'e}duits {\`a} l'analyse alg{\'e}brique des quantit{\'e}s finies, Ve. Courcier, 1813.

\bibitem{watts1998collective}
D.~J. Watts, S.~H. Strogatz, Collective dynamics of ‘small-world’ networks, Nature 393~(6684) (1998) 440--442.

\bibitem{barabasi1999emergence}
A.-L. Barab{\'a}si, R.~Albert, Emergence of scaling in random networks, science 286~(5439) (1999) 509--512.

\bibitem{erdds1959random}
P.~ERDdS, A.~R\&wi, On random graphs i, Publ. math. debrecen 6~(290-297) (1959) 18.

\bibitem{erdHos1960evolution}
P.~Erd{\H{o}}s, A.~R{\'e}nyi, On the evolution of random graphs, Publication of the Mathematical Institute of the Hungarian Academy of Sciences 5 (1960) 17--60.

\bibitem{zachary1977information}
W.~W. Zachary, An information flow model for conflict and fission in small groups, Journal of anthropological research 33~(4) (1977) 452--473.

\bibitem{lusseau2003bottlenose}
D.~Lusseau, K.~Schneider, O.~J. Boisseau, P.~Haase, E.~Slooten, S.~M. Dawson, The bottlenose dolphin community of doubtful sound features a large proportion of long-lasting associations: can geographic isolation explain this unique trait?, Behavioral ecology and sociobiology 54~(4) (2003) 396--405.

\bibitem{genois2015data}
M.~G{\'e}nois, C.~L. Vestergaard, J.~Fournet, A.~Panisson, I.~Bonmarin, A.~Barrat, Data on face-to-face contacts in an office building suggest a low-cost vaccination strategy based on community linkers, Network Science 3~(3) (2015) 326--347.

\bibitem{rossi2015network}
R.~Rossi, N.~Ahmed, The network data repository with interactive graph analytics and visualization, in: Proceedings of the AAAI conference on artificial intelligence, Vol.~29, 2015.

\bibitem{guimera2003self}
R.~Guimer{\`a}, L.~Danon, A.~Diaz-Guilera, F.~Giralt, A.~Arenas, Self-similar community structure in a network of human interactions, Physical Review E 68~(6) (2003) 065103.

\bibitem{de2009social}
M.~De~Choudhury, H.~Sundaram, A.~John, D.~D. Seligmann, Social synchrony: Predicting mimicry of user actions in online social media, in: International Conference on Computational Science and Engineering, Vol.~4, IEEE, 2009, pp. 151--158.

\bibitem{leskovec2009community}
J.~Leskovec, K.~J. Lang, A.~Dasgupta, M.~W. Mahoney, Community structure in large networks: Natural cluster sizes and the absence of large well-defined clusters, Internet Mathematics 6~(1) (2009) 29--123.

\bibitem{tang2009relational}
L.~Tang, H.~Liu, Relational learning via latent social dimensions, in: Proceedings of the 15th ACM SIGKDD international conference on Knowledge discovery and data mining, 2009, pp. 817--826.

\bibitem{newman2002assortative}
M.~E. Newman, Assortative mixing in networks, Physical review letters 89~(20) (2002) 208701.

\bibitem{albert2002statistical}
R.~Albert, A.-L. Barab{\'a}si, Statistical mechanics of complex networks, Reviews of modern physics 74~(1) (2002) 47.

\bibitem{newman2018networks}
M.~Newman, Networks, Oxford university press, 2018.

\bibitem{freeman1977set}
L.~C. Freeman, A set of measures of centrality based on betweenness, Sociometry (1977) 35--41.

\bibitem{freeman1978centrality}
{L. C. Freeman}, Centrality in social networks conceptual clarification, Social networks 1~(3) (1978) 215--239.

\bibitem{benesty2009pearson}
J.~Benesty, J.~Chen, Y.~Huang, I.~Cohen, Pearson correlation coefficient, in: Noise reduction in speech processing, Springer, 2009, pp. 1--4.

\bibitem{shi2025cycrank}
W.~Shi, T.~Fan, S.~Xu, R.~Yang, L.~L{\"u}, Cycrank: a universal optimization framework for vital nodes identification in complex networks, New Journal of Physics 27~(5) (2025) 054602.

\bibitem{fan2017generalization}
T.~Fan, Y.~Zhu, L.~Wu, X.~Ren, L.~L{\"u}, Generalization and application of dhc theorem on directed and weighted networks, Journal of University of Electronic Science and Technology of China 5~(46) (2017) 766--776.

\bibitem{qiu2021identifying}
Z.~Qiu, T.~Fan, M.~Li, L.~L{\"u}, Identifying vital nodes by achlioptas process, New Journal of Physics 23~(3) (2021) 033036.

\bibitem{shi2023cost}
W.~Shi, S.~Xu, T.~Fan, L.~L{\"u}, Cost effective approach to identify multiple influential spreaders based on the cycle structure in networks, Science China Information Sciences 66~(9) (2023) 192203.

\bibitem{fan2021characterizing}
T.~Fan, L.~L{\"u}, D.~Shi, T.~Zhou, Characterizing cycle structure in complex networks, Communications Physics 4~(1) (2021) 272.

\bibitem{scaman2016suppressing}
K.~Scaman, A.~Kalogeratos, N.~Vayatis, Suppressing epidemics in networks using priority planning, IEEE Transactions on Network Science and Engineering 3~(4) (2016) 271--285.

\end{thebibliography}
\end{document}